
\documentclass[preprint,authoryear,12pt]{elsarticle}


\usepackage[authoryear]{natbib}
\usepackage{amssymb,latexsym}
\usepackage{graphicx,amsmath,amsfonts,amsthm}
\usepackage{comment}

\def\row#1#2{{#1}_1,\ldots ,{#1}_{#2}}

\input prepictex
\input postpictex
\input pictexwd

\newcounter{theexample} \setcounter{theexample}{1}

\newcounter{themyclaim}
\setcounter{themyclaim}{0}

\newtheorem{myclaim}[themyclaim]{Claim}



\journal{ArXiv.org}

\def\row#1#2{{#1}_1,\ldots ,{#1}_{#2}}

\def\2vec#1#2{\left(\begin{array}{c}{#1}\\{#2}\end{array}\right)}

\newtheorem{theorem}{Theorem}
\newtheorem{corollary}{Corollary}
\newtheorem{lemma}{Lemma}

\newtheorem{proposition}{Proposition}
\newtheorem{example}{Example}

\newtheorem{definition}{Definition}

\usepackage{multirow}
\usepackage{colortbl}
\usepackage[usenames,dvipsnames]{xcolor}
 

\begin{document}

\begin{frontmatter}
\title{\bf Is it ever safe to vote strategically?}
\author{\bf  Arkadii Slinko and Shaun White}
\address{The University of Auckland, New Zealand}


\begin{abstract}
There are many situations in which mis-coordinated strategic voting can leave strategic voters worse off than they would have been had they not tried to strategize.  We analyse the simplest of such scenarios, in which the set of strategic voters all have the same sincere preferences and all cast the same strategic vote, while all other voters vote sincerely.   Most mis-coordinations in this framework can be classified as instances of either strategic overshooting (too many voted strategically) or strategic undershooting (too few).  If mis-coordination can result in strategic voters ending up worse off than they would have been had they all just voted sincerely, we call the relevant strategic vote unsafe.  We show that under every onto and non-dictatorial social choice rule there exist circumstances where a voter has an incentive to cast a safe strategic vote.  We extend the Gibbard-Satterthwaite Theorem by proving that every onto and non-dictatorial social choice rule can be individually manipulated by a voter casting a safe strategic vote.
\end{abstract}
\begin{keyword}
Social choice function \sep strategic voting \sep Gibbard-Satterthwaite theorem\sep manipulation \sep safe strategic vote.
\end{keyword}

\end{frontmatter}




\newpage


\section{Introduction}

\subsection{Overview}
In this paper we consider strategic behaviour of voters when one candidate must be chosen from a set of candidates  and voters' preferences with respect to those candidates are expressed as strict linear orders. The Gibbard-Satterthwaite Theorem (GST)---arguably one of the most important theoretical results  in this direction to date---states that under any onto and non-dictatorial social choice rule there exists a situation where a voter can do better by casting a strategic vote rather than the sincere one, provided everyone else votes sincerely.  In short, every onto and non-dictatorial social choice rule is individually manipulable.  If only one voter can manipulate the election, and she is rational, then she will do this unless having some moral objections towards manipulation.  However, it may happen that two voters with differing preferences each can manipulate the same election. In this case the result is much less predictable. We illustrate this by the following example  

\begin{example}
\label{ex1}
Suppose four people are to choose between three alternatives.
Let the profile of sincere preferences be
\[
\begin{array}{c|c|c|c}
1&2&3&4\\
\hline
a&b&c&c\\
b&a&a&b\\
c&c&b&a\\
\hline
\end{array}
\]
and the rule used be Plurality with breaking ties in accord with the order $a>b>c$. If everybody votes sincerely, then $c$ is elected. Voters 1 and 2 are Gibbard-Satterthwaite manipulators. Voters 3 and 4 are expected to vote sincerely. Voter 1 can make $b$ to win by voting $b>a>c$ and voter 2 can make $a$ to win by voting $a>b>c$. However, if they both try to manipulate, $c$ will remain the winner. Each of them would prefer that the other one manipulates.
\end{example}
 In general, the situation when voters of different types interact strategically is too complex to analyse. Going from the Gibbard-Satterthwaite framework, when only one voter can be strategic to the framework where all voters may be strategic is too big a jump. In this paper we analyse an intermediate framework, namely the case when only voters of one particular type may be strategic. Let us see why this framework is interesting.

If voting is anonymous and one voter can individually manipulate, so can every other voter present with the same preferences.  And in a large profile there will be many such voters. But it is not in general true that if multiple like-minded voters simultaneously attempt exactly the same individual manipulation, the outcome will necessarily be (from their perspective) a success.  To manipulate successfully the strategically-inclined voter may need to coordinate (intentionally or accidentally) with other voters with whom she shares the same preferences and, hence, the same incentives.  Here is an example when such coordination is needed.

\begin{example}
\label{ex2}
Suppose four people are to choose between three alternatives.
Let the profile of sincere preferences be
\[
\begin{array}{c|c|c|c}
1&2&3&4\\
\hline
a&a&b&c\\
b&b&c&b\\
c&c&a&a\\
\hline
\end{array}
\]
and the rule used be Borda with breaking ties in accord with the order $a>b>c$. If everybody votes sincerely, then $b$ is elected. Voters 1 and 2 are Gibbard-Satterthwaite manipulators. Voter 1 can make $a$ to win by voting $a>c>b$ and voter 2 can do the same. However, if they both try to manipulate, their worst alternative $c$ will become the winner.  
\end{example}

We note that in the second example it is easier for the two would-be manipulators to coordinate since they have the same type, i.e., have identical sincere preferences, and hence no conflict of interest. The conclusion that we make from consideration of these two examples is: when a group of voters involves voters of different types then, for them, manipulation is not purely coordination problem, negotiation may be also required. This is why in this paper we will consider only homogeneous groups of voters, that is voters with identical preferences.   \par\medskip

As in the Gibbard-Satterthwaite farmework we assume that voters have complete information about voters' sincere preferences and beliefs about their voting intentions. To extend the Gibbard-Satterthwaite theorem we must consider the case when  several voters of the same type are all pivotal and can change the result to their advantage by casting the same strategic vote\footnote{This will be for example the case if the voting rule is anonymous.} believing that all voters of other types will vote sincerely. In such situation any of those would-be manipulators may find it unsafe to act on such incentive. 
 Indeed, it may happen that if too many or too few of them act on the same incentive the result may be worse than the outcome when everyone is sincere. This is exactly the situation in Example~\ref{ex2}. An alternative to this may be a situation when one or more of those would-be manipulators will not be worse of no matter how other voters of her type will vote; in such case we say that she can cast a safe strategic vote. Of course, the term `safe' refers only to situations when voters of all other types vote sincerely.  The question that arises is whether or not under any non dictatorial voting rule there exist situations when a voter can manipulate and do it  safely.  In this paper we will answer this question in the affirmative. 

The main result of this paper, just formulated, is about pivotal voters, however, most of the paper we will work with voters who are not pivotal. Our starting point will be the concept of an incentive to vote strategically.  We will say that a voter has {\em  an incentive to vote strategically} if she can become part of a group of voters of her type who all would benefit if every group member casts the same strategic vote.  The rational for this definition is that strategic voters despite not being pivotal vote strategically because they hope or expect that they will be joined by the `right' set of like-minded voters also casting strategic votes, and a more favourable outcome will be brought about by their collective effort. 
For example,  at the 2000 US Presidential election, a voter who preferred Nader to Gore and Gore to Bush, and felt assured that Nader would not win, would have an incentive to strategically vote Gore. This incentive would have been shared by every voter of this particular type but not felt by any voter of another type.

As in the case of pivotal voters, we can distinguish between {\em safe} and {\em unsafe} strategic votes.  In some situations, all like-minded voters in an election all share an incentive to vote strategically but mis-coordination can occur: if the group splits up into those who act on the incentive and those who do not  in the `wrong' way, the social choice outcome can, from their perspective, deteriorate rather than improve.  In such cases we say the strategic vote in question is unsafe.  The most recognisable unsafe strategic votes are those that can lead to strategic overshooting or strategic undershooting.  {\em Strategic overshooting} occurs when `too many' like-minded voters simultaneously act on a common incentive to vote strategically, and {\em strategic undershooting} occurs when `too few' do.  When voting is not anonymous, a strategic vote can be unsafe even though neither over- nor undershooting are possible.
In the example of the 2000 US Presidential election the strategic vote by the Nader voters for Gore  would have been safe.  Despite of this, the number of voters having this incentive and acting on it was insufficient to bring about Gore's presidency\footnote{In the crucial state Florida Bush got only 537 more votes than Gore while 97,488 voters voted Nader  \cite{US2000results}.}. 

It is worthwhile to note that in this framework a sincere vote can be strategic. Indeed a strategically inclined voter may decide that there is already sufficiently many voters who will cast a strategic vote so she may end up submitting her sincere vote after weighing all pros and cons. 

Consideration of the difference between a safe and an unsafe strategic vote reveals an additional reason why a voter might not act on a known to her incentive to vote strategically: without the ability to 
coordinate with their like-minded fellows, the strategic voter may face the risk of making matters worse rather than better.  This adds to the list of impediments to strategic voting identified elsewhere, which includes ($i$) the inability to acquire the necessary information~\citep{ChopraPacuitParikh2004}, ($ii$) the inability to process the large amount of information one has~\citep{BartoldiToveyTrick1989,ConitzerSandholmLang2007}, ($iii$) the possibility of provoking an unfavourable response~\citep{Pattanaik1976a,Pattanaik1976b,Barbera1980}, and ($iv$) lack of familiarity with the electoral context~\citep{CrispOlivellaPotter2012}.

The first main result of this paper will be Theorem~\ref{ASSWtheorem} that states that under any onto and non-dictatorial social choice rule there can arise a scenario where a voter has an incentive to cast a safe strategic vote. 
Our second main result---which extends the Gibbard-Satterthwaite theorem---states that under any onto and non-dictatorial social choice rule there will exist a situation where a certain pivotal voter can manipulate safely. This can be deduced from Theorem~\ref{ASSWtheorem} rather straightforwardly.

This paper is organised as follows.  In Section~\ref{sec:rel_lit} we discuss the literature related to this article. Section~\ref{sec:model} gives our formal definitions and states our results.  Our Section~\ref{sec:examples} will be entirely devoted to examples of safe and unsafe strategic votes and their graphical illustration. 
The main theorems are proven in Sections~\ref{sec:tools} -~\ref{sec:fouralternatives}.  
Section~\ref{sec:conclusion} concludes with the discission of the results obtained and formulates some open questions.

This paper is a refinement of our preprints~\citep{SlinkoWhite2008a,SlinkoWhite2008b}.


\subsection{Related literature}
\label{sec:rel_lit}

There are three strands in the literature  related to this work: coalitional manipulability of single-winner social choice rules, manipulability of multi-winner social choice rules  and informational aspects of manipulability. 


An election is called coalitionally manipulable \cite{Murakami1968,Pattanaik1973,Chamberlin1985} if a group of voters may change their votes in such a way that the result of the election becomes more desirable for them.  It is also assumed that the voters outside the coalition are not strategic and vote sincerely. 

The central concept for this strand of literature is the concept of the average size of minimal manipulating coalition \cite{Chamberlin1985}. This parameter is used to compare the `quality' of voting rules \cite{Chamberlin1985,PS2006,Pritchard2009}. 
It has been shown that, if voters are assumed to be independent and the size of the coalition grows slower than $\sqrt{n}$, where $n$ is the number of voters, then the probability that a random election is manipulable by the coalition of this size goes to zero as the number of voters goes to infinity, whereas if the number of manipulators grows faster than $\sqrt{n}$, then the probability that a random profile is manipulable goes to one (see, e.g., 
\cite{Slinko2004, XiaConitzer2008, MPR2012}).
If, on the other hand, the voters are not independent, say under the condition that all voting situations are equiprobable, then the average size of the minimal manipulating coalition may have order $n$ (\cite{Slinko2006}). 

The early literature on threats and counter-threats should be mentioned here although it is not on coalitional manipulability but rather on coalitional contr-manipulability. \citep{Pattanaik1976a,Pattanaik1976b,Barbera1980} assessed how coalitions (which could potentially be made up of any number of different voter-types) could respond to threats by individuals to manipulate; it may be interesting to look at situations in which one voter-type threatens to manipulate and another type then issues a counter-threat.

The main drawback of the concepts of coalitional manipulability and contr-manipulability is that they assume that coalitions can somehow be formed without specifying the mechanism.  Those voters must find each other, agree to form a coalition, calculate possible manipulations, negotiate which one to implement. Negotiation may however be difficult, say in our Example coalition consisting of voter 1 and voter 2 can manipulate the election, however, there is no obvious way for them to decide which course of action to take. This model, however, makes sense when there is an external actor, say a briber, who directs the voters she bribed. In contrast our voters assumed to be independent and have to take their own decisions which makes them players in a voting manipulation game. We also assume that only voters with identical preferences may try to a certain extent to coordinate their actions. 

The main idea of overshooting came to authors observing the behaviour of voters during 2005 general election in New Zealand conducted under the system of proportional representation \cite{SlinkoWhite2010}. There we gave examples of how safe and unsafe strategic votes could arise in that framework. To the best of our knowledge, the distinction between safe and unsafe strategic votes, and the concepts of strategic over/undershooting, were first introduced (in the context of strategic voting under proportional representation) in this paper.  ~\cite{ParikhPacuit2005} also used the expression `safe vote', but to mean a vote that is strategically superior to an abstention. 

 ~\cite{Batto2008} and~\cite{ElyBaliga2012} have described multi-winner elections in which strategic desertion of leading candidates appears to have gone too far, and strategic voters have overshot. 

There is an important relationship between the quality and quantity of information a voter has, and can receive and send via their communication networks, and a voter's ability to identify and respond to incentives to vote strategically.   
~\cite{ChopraPacuitParikh2004} made clear the importance of the relationship.  They stressed that the Gibbard-Satterthwaite theorem becomes `effective' only when voters are able to acquire a certain amount of information.   Although~\cite{ChopraPacuitParikh2004} made this point particularly thoroughly, the basic idea goes back much further -~\cite{DummettFarquharson1961}, for instance, wrote that ``[t]he only hypothesis which would make the assumption of uniformly sincere voting plausible would be the absurdly restrictive one that no voter had any knowledge, before or during the voting, of the preference scales of others''. 

In response to our preprints, certain follow-on research has already begun.  On the issue of complexity,~\cite{HazonElkind2010} and~\cite{IanovskiYuElkindWilson2011} presented polynomial time algorithms for finding a safe strategic vote under particular social choice rules.  And~\cite{WilsonReyhani2010} looked into the asymptotic probability of a safe manipulation existing for a given scoring rule.





\section{The model and the results}
\label{sec:model}

We consider the situation where a finite set of {\em voters} $[n]=\{1,2,\ldots, n\}$ are to choose a single alternative from a finite set of {\em alternatives} ${\cal A}$. Throughout the paper it will be assumed that $|{\cal A}|\ge 3$ and we will  denote the alternatives  $A, B, C, \ldots $.  Voters have preferences over alternatives represented by strict linear orders on $\cal A$. 
Voters whose preferences are identical will be referred to as being of the same {\em type}.  
When $|\mathcal{A}|=3$, a voter who prefers $X$ to $Y$ and $Y$ to $Z$, where $\{X,Y,Z\}=\{A,B,C\}$, will be referred to as an XYZ type; analogous language will be used when $|\mathcal{A}|>3$.  
A {\em profile} is a sequence of linear orders specifying one preference order for each voter.  A profile will often be represented as an $n$-tuple $R = (R_1, \ldots, R_n)$; further we will introduce an alternative notation.  

Given a set of alternatives $\cal A$ and a set of voters $[n]$, a {\em social choice rule} $F$ is a mapping from the set of all possible profiles to $\cal A$.  Voter $i$ is the {\em dictator} of $F$ if they would not desire to change the value of $F$ at any profile or, which is the same, $F$ always selects the most preferred alternative of voter $i$.  A social choice rule $F$ is {\em anonymous} if it does not pay attention to voters' labels, i.e., if $\pi$ is a permutationn of $(1,2,\ldots, n)$, then 
\[
F(\row Rn)=F(R_{\pi(1)},\ldots, R_{\pi(n)}).
\]  

Let $R$ be a profile, $V\subseteq [n]$ a set of voters of the same type, and $L$ a preference order over $\cal A$ other than the one representing the preferences of the voters in $V$.  Then $R_{-V}(L)$ shall denote the profile obtained from $R$ by replacing, for every $i \in V$, linear order $R_i$ with $L$, ceteris paribus.  $R_{-V}(L)$ can be read informally as ``the profile $R$, except that the preferences of all the voters in $V$ have been switched to $L$''.  If $X, Y \in \mathcal{A}$ and $V \subseteq [n]$ then $X \succ_V Y$ ($X \succeq_V Y$) will denote that every voter in $V$ ranks $X$ above (no lower than) $Y$. 

We are now ready to formulate

\begin{theorem}[Gibbard-Satterthwaite]
\label{GStheorem}
Suppose an onto, nondictatorial social choice rule $F$ is employed to choose one of at least three alternatives.  Then there exists a profile $R$, a linear order $L$, and a voter $i$, such that $F({R_{-\{i\}}}(L))\succ_{\{i\}}F(R)$.
\end{theorem}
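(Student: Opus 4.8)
The plan is to prove the contrapositive: assuming $F$ is \emph{strategy-proof}—that is, the conclusion fails everywhere, so $F(R)\succeq_{\{i\}}F(R_{-\{i\}}(L))$ for every profile $R$, voter $i$, and order $L$—and assuming $F$ is onto, I will show that $F$ must be dictatorial. Since this contradicts non-dictatoriality, no onto non-dictatorial rule can be strategy-proof, which is precisely the statement. I would first isolate two structural consequences of strategy-proofness. The first is a single-voter \emph{monotonicity} property: if $F(R)=X$ and $R'=R_{-\{i\}}(R_i')$ changes only voter $i$'s order in a way that does not lower $X$ relative to any other alternative, then $F(R')=X$. This is obtained by applying the non-manipulation inequality at both $R$ and $R'$: letting $b=F(R')$, strategy-proofness at $R$ gives $X\succeq_{\{i\}}b$ under $R_i$, while at $R'$ it gives $b\succeq_{\{i\}}X$ under $R_i'$, and the two are compatible only if $b=X$ once $X$'s relative position has weakly improved.

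The second consequence is \emph{unanimity}: if every voter ranks $X$ first, then $F$ selects $X$. To derive this I would use ontoness to pick a profile $R^{0}$ with $F(R^{0})=X$, and then raise $X$ to the top of each voter's order one voter at a time; each such move only improves $X$'s relative position, so monotonicity keeps the outcome equal to $X$ throughout, and the terminal profile is the one in which everyone ranks $X$ first. From unanimity and monotonicity one also gets the Pareto property needed below, namely that an alternative ranked strictly below another by every voter is never chosen.

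Next I would reduce to Arrow's Impossibility Theorem. For each pair of distinct alternatives $X,Y$ and each full profile $P$, form the auxiliary profile $P^{XY}$ in which every voter places $X$ and $Y$ in the top two positions in the same relative order as in $P$, and declare $X$ socially preferred to $Y$ exactly when $F(P^{XY})=X$. Using Pareto I would show $F(P^{XY})\in\{X,Y\}$, so the relation is complete and antisymmetric; using monotonicity I would show it does not depend on how the alternatives outside $\{X,Y\}$ are arranged, giving both well-definedness and independence of irrelevant alternatives; and unanimity gives the Pareto condition for the induced ordering. The delicate point is \emph{transitivity}: a cycle $X\succ Y\succ Z\succ X$ would, after placing $X,Y,Z$ at the top three positions and tracking $F$ as voters' orders are adjusted one at a time, force a profitable misreport and contradict strategy-proofness. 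Once transitivity is established, the induced relation is a genuine Arrovian social welfare function, so Arrow's theorem yields a voter $i$ who dictates every pairwise comparison; a final monotonicity argument then shows that at every profile $F$ returns $i$'s top alternative, so $i$ is a dictator of $F$.

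The main obstacle I anticipate is exactly this passage from the ``local'' non-manipulation inequalities to the ``global'' dictatorship conclusion—concretely, establishing transitivity of the induced ordering and then lifting the resulting pairwise dictator to a dictator of $F$. Both steps hinge on carefully following how $F$'s output moves along a chain of single-voter deviations, which is where the full force of strategy-proofness is spent. If verifying the Arrovian axioms turns out to be more cumbersome than a direct induction, I would instead run a self-contained pivotal-voter argument in the style used for Arrow's theorem: move a fixed alternative from the bottom to the top of every voter's order one voter at a time, locate the unique voter whose switch changes the social outcome, and show directly that this pivotal voter dictates every pairwise choice and hence the whole rule.
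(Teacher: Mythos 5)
The paper does not actually prove this theorem: it states it as the classical Gibbard--Satterthwaite result and refers the reader to the original papers of Gibbard (1973) and Satterthwaite (1975), so there is no in-paper argument to match yours against. Your route --- take the contrapositive, extract single-voter monotonicity and unanimity (hence Pareto) from strategy-proofness plus ontoness, induce a social welfare function on pairs via the ``lift $X$ and $Y$ to the top'' construction, verify the Arrovian axioms, invoke Arrow, and then lift the pairwise dictator to a dictator of $F$ --- is essentially Satterthwaite's original reduction and is a correct strategy; the fallback pivotal-voter argument you mention is likewise a standard, self-contained alternative. The local lemmas you do spell out are sound: the two non-manipulation inequalities at $R$ and $R'$ do force $F(R')=F(R)$ when $X$'s standing weakly improves, and raising $X$ voter-by-voter from an onto witness does yield unanimity.

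That said, be aware that as written this is a plan rather than a proof: the two places you yourself flag --- transitivity of the induced relation (ruling out a cycle $X\succ Y\succ Z\succ X$ by tracking $F$ along a chain of single-voter changes among profiles with $\{X,Y,Z\}$ on top) and the final step showing that the Arrovian pairwise dictator's top choice is returned by $F$ at \emph{every} profile --- are where essentially all of the technical content of the theorem lives, and neither is carried out. A complete write-up must also check the well-definedness of the pairwise relation (independence of how alternatives outside $\{X,Y\}$ are arranged), which your monotonicity lemma does handle since both $X$ and $Y$ sit above everything else in $P^{XY}$. None of these are obstacles of principle, but until they are executed the argument is an outline of a known proof, not a proof.
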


\noindent Proofs can be found in the original papers~\cite{Gibbard1973} and~\cite{Satterthwaite1975}.  \par\medskip
For the next four definitions we fix a set of voters $[n]$, a set of alternatives $\cal A$, and a social choice rule $F$.  

\begin{definition}[An incentive to vote strategically]
Let $R$ be a profile, $i$ be a voter, and $V$ be the set of all voters with preferences identical to those of $i$ at $R$.  If there exists a linear order $L \ne R_i$ and a subset $V_1 \subseteq V$ containing $i$ such that
\[
F(R_{-V_1}(L)) \succ_V F(R)
\]
then we will say that, at $R$, voter~$i$ has {\em an incentive to vote strategically}.
\end{definition}


A voter with an incentive to vote strategically cannot necessarily change the social choice by themselves.  They may, however, hope that if they do cast the strategic vote, they will be joined in that act by the `right' set of like-minded fellow voters.  In some circumstances, if the `wrong' set of like-minded voters act on a shared incentive they can inadvertently make the outcome worse rather than better (from their perspective, of course).  In such circumstances we call the strategic vote in question unsafe. 

\begin{definition}[Unsafe strategic vote]
Suppose that at the profile $R$, voter $i$ has an incentive to strategically vote $L \ne R_i$.  Define $V \subseteq [n]$ to be the set of all voters with preferences identical to those of $i$ at $R$.  The strategic vote $L$ is unsafe for voter $i$ if there exists $V_2 \subseteq V$ such that
 $i \in V_2$, all the voters in $V_2$ have an incentive to strategically vote $L$, but
$F(R) \succ_V F(R_{-V_2}(L))$.
\end{definition}

A strategic vote that is not unsafe will be referred to as safe. During our examples and proofs we will refer to a particular kind of safe manipulation as an {\em escape}.  Let $R$ be a profile, and suppose $F$ maps $R$ to (say) $A$.  Let $L$ be one particular order that has $A$ last, and let $V$ be the entire set of all voters who at $R$ are of type $L$.  If any voter in $V$ has an incentive to vote strategically then voters in $V$ will be said to be able to escape at (or from) $R$.  

The most readily identifiable unsafe strategic votes arise when it is possible to get a result worse than the status quo when either `too many' or `too few' voters of a particular type act on a shared incentive to vote strategically.  

\begin{definition}[Strategic overshooting]
Let $R$ be a profile, $i$ a voter, and $L$ a linear order over $\mathcal{A}$ other than $R_i$.  Define $V$ to be the set of all voters with preferences identical to those of $i$ at $R$.   If we can find two subsets $V_2 \subsetneq V_1$ of $ V$, both containing $i$, such that every voter in $V_2$ has an incentive to strategically vote $L$ but
\[
F(R_{-V_1}(L)) \succ_V F(R) \succ_V F(R_{-V_2}(L)),
\]
then we say that at $R$ voter $i$ can {\em strategically overshoot} by voting $L$.
\end{definition}

\noindent Strategic overshooting occurs when too many voters act strategically.  Strategic undershooting occurs when too few do.  More generally, strategic undershooting occurs when the inclusion relation between $V_1$ and $V_2$ is reversed.

\begin{definition}[Strategic undershooting]
Let $R$ be a profile, $i$ be a voter, and $L$ be a linear order over $\mathcal{A}$ other than $R_i$.  Define $V$ to be the set of all voters with preferences identical to those of $i$ at $R$.  If we can find two subsets $V_2 \subsetneq V_1$ of $ V$, both containing $i$, such that every voter in $V_2$ has an incentive to strategically vote $L$ but
\[
F(R_{-V_1}(L)) \succ_V F(R) \succ_V F(R_{-V_2}(L)),
\]
then at $R$ voter $i$ can {\em strategically undershoot} by voting $L$.
\end{definition}


Under anonymous social choice rules, unsafe strategic votes arise when and only when strategic voters face the prospect of over- or undershooting.  Under non-anonymous social choice rules it is not necessarily the case that voters with an incentive to cast an unsafe strategic vote can strategically over- or undershoot. Since such rules are exotic we omit this example.

The distinction between safe and unsafe strategic votes leads naturally to the idea of a choice rule being safely manipulable.

\begin{definition}[Safe manipulability]
A social choice rule is safely manipulable if, under it, a voter can have an incentive to cast a safe strategic vote.
\end{definition}

The concepts we have formally defined enable us to broaden the Gibbard-Satterthwaite framework in the direction sought.  In Section~\ref{sec:examples} we demonstrate that our concepts are merited precisely because the situations to which they apply are not unusual.  In the subsequent sections we will prove the following theorem.  

\begin{theorem}
\label{ASSWtheorem} 
Every onto and non-dictatorial social choice rule facing at least three alternatives is safely manipulable.
\end{theorem}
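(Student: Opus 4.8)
The plan is to reduce safe manipulability to the existence of what the paper calls an escape, and then to manufacture an escape out of an ordinary Gibbard--Satterthwaite manipulation. The key observation is that an escape is automatically safe: if $F(R)=A$ and $A$ is the least preferred alternative of the common type $L$ of the set $V$ of would-be manipulators, then for every $V_2\subseteq V$ and every deviation $L'$ we have $F(R_{-V_2}(L'))\succeq_V A=F(R)$, simply because nothing is ranked below $A$ by the voters in $V$. Hence the inequality $F(R)\succ_V F(R_{-V_2}(L'))$ required by the definition of an unsafe strategic vote can never hold, so any strategic vote cast from an escape is safe. Consequently it suffices to exhibit, for every onto and non-dictatorial $F$, a single profile $R$ with winner $A$ together with a voter $i$ whose sincere ballot ranks $A$ last and who can change the outcome away from $A$ by a unilateral deviation; such a voter then has an (automatically safe) incentive to vote strategically.

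First I would invoke Theorem~\ref{GStheorem} to obtain a profile $R$, a voter $i$, and an order $L$ with $F(R)=A$, $F(R_{-\{i\}}(L))=B$, and $B\succ_i A$. This is a manipulation, but $A$ need not be $i$'s worst alternative, so it need not be an escape. To fix this I would push $A$ downward in voter $i$'s sincere ballot, one adjacent transposition at a time, keeping every alternative that $i$ already prefers to $A$ (in particular $B$) above $A$, until $A$ reaches the bottom. Write $R^{*}$ for the resulting profile. Because $R^{*}$ and $R$ differ only in $i$'s ballot, we have $R^{*}_{-\{i\}}(L)=R_{-\{i\}}(L)$, so $i$ can still steer the outcome to $B$ from $R^{*}$. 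Thus if the winner survives this lowering, that is, if $F(R^{*})=A$, then $(R^{*},i)$ is an escape with strategic vote $L$ and we are done.

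The main obstacle is the remaining case, in which lowering $A$ in $i$'s ballot changes the winner before $A$ reaches the bottom: at some transposition the outcome jumps from $A$ to a different alternative $E$. Here the naive argument breaks, because we may not assume $F$ is strategy-proof or monotone (indeed we are proving it is manipulable), so the outcome need not move only between the two transposed alternatives, and reverting toward $R_i$ now worsens $i$'s position rather than improving it. Controlling this case is the heart of the proof: one must track which alternative $E$ the outcome moves to and relocate the escape --- possibly to a different winner, a different voter, or a different type --- and then argue that some escape must ultimately be reachable. I expect this to require a careful case analysis that is genuinely sensitive to the number of alternatives, which is presumably why the argument is organised with the three-alternative situation treated first and a separate treatment for four (and hence, via a reduction, for more) alternatives. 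Throughout one must avoid assuming anonymity, since the theorem is asserted for arbitrary, possibly non-anonymous, rules.
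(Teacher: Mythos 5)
Your opening observation is correct and is precisely the role that \emph{escapes} play in the paper: if the current winner is ranked last by every voter of the manipulating type, then no sub-coalition of that type can make the outcome strictly worse for the type, so any incentive they have to vote strategically is automatically safe. Your handling of the first case (the winner $A$ survives being pushed to the bottom of voter $i$'s ballot, yielding an escape at $R^{*}$) is also sound. The problem is that everything after that is a placeholder, and the case you defer is not a residual technicality --- it is the whole theorem restated. At the adjacent transposition where the outcome first jumps from $A$ to some $E$, all you obtain is another individual Gibbard--Satterthwaite manipulation (either the deviating voter prefers $E$ to $A$ and can deviate forward, or prefers $A$ to $E$ and can deviate back), and that manipulation may again be unsafe; you are back exactly where you started, with no measure of progress. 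You write that one must ``relocate the escape'' and ``argue that some escape must ultimately be reachable,'' but you supply no mechanism guaranteeing that this relocation terminates or ever lands on an escape, and no such mechanism is obvious.

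For comparison, the paper does not argue by lowering $A$ in a single ballot. Its engine is the pair Proposition~\ref{inferiorproposition}/Proposition~\ref{endup}: if a voter has an incentive to vote $L$ and switching the \emph{entire} type to $L$ gives an outcome at least as good as the status quo, then a maximal $L$-inferior subset yields a profile at which the remaining voters of that type manipulate safely. Together with Proposition~\ref{onlyundershooting} this reduces the problem to genuine overshooting configurations, i.e.\ $F(R)=B$, $F(R_{-V_1}(L))=A$ and $F(R_{-V}(L))=C$ with $A\succ_V B\succ_V C$, and the substance of the proof is then an explicit finite combinatorial analysis: the two-voter, three-alternative tables of Lemma~\ref{tricky}, the long profile chains of Lemma~\ref{3Aweaklyu}, and an induction on $|\mathcal{A}|$ via the restricted rules $F_{-X}$, Proposition~\ref{theorem2fromOS2}, Proposition~\ref{ppr} and the closing chain of claims about when $F_{-A}$ is dictatorial. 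None of that machinery, nor any substitute for it, appears in your outline, so as it stands the proposal establishes only the easy half of a dichotomy and leaves the theorem unproved.
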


\noindent The proof of Theorem~\ref{ASSWtheorem} invokes the Gibbard-Satterthwaite Theorem, but it is not a straightforward extension. This theorem implies our main result which extends the  Gibbard-Satterthwaite theorem.

\begin{theorem}
\label{extendGS}
Suppose an onto, nondictatorial social choice rule $F$ is employed to choose one of at least three alternatives.  Then there exists a profile $R$, a linear order $L$, and a voter $i$, such that $F({R_{-\{i\}}}(L))\succ_{\{i\}}F(R)$, and $L$ is a safe strategic vote for $i$.
\end{theorem}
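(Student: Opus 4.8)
The plan is to deduce Theorem~\ref{extendGS} from Theorem~\ref{ASSWtheorem} by converting a safe \emph{group} incentive into a safe \emph{individual} manipulation. By Theorem~\ref{ASSWtheorem} I may fix a profile $R$, a voter $i$, the set $V$ of all voters sharing $i$'s type, and a safe strategic vote $L \ne R_i$ for which $i$ has an incentive; thus there is a coalition $V_1 \subseteq V$ with $i \in V_1$ and $F(R_{-V_1}(L)) \succ_V F(R)$, while no coalition $V_2 \subseteq V$ containing $i$ whose members all share this incentive satisfies $F(R) \succ_V F(R_{-V_2}(L))$. Among all coalitions witnessing the incentive I would choose $V_1$ of minimum cardinality. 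Note first that every member of $V_1$ inherits the incentive to vote $L$, since $V_1$ itself witnesses it.

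If $|V_1| = 1$ then $V_1 = \{i\}$ and $F(R_{-\{i\}}(L)) \succ_{\{i\}} F(R)$ already, with $L$ safe by hypothesis, so we are done. Otherwise pick any $u \in V_1 \setminus \{i\}$ and consider the profile $P = R_{-(V_1 \setminus \{u\})}(L)$, in which everyone in $V_1$ except $u$ has already switched to $L$. Minimality of $V_1$ forbids $F(R_{-(V_1\setminus\{u\})}(L)) \succ_V F(R)$, while safety applied to the coalition $V_1 \setminus \{u\}$ (which contains $i$ and consists of voters all having the incentive) forbids $F(R) \succ_V F(R_{-(V_1\setminus\{u\})}(L))$; since outcomes are linearly ordered, the two prohibitions force $F(P) = F(R)$. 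At $P$ the voter $u$ is still of the original type, and switching $u$ to $L$ yields $F(P_{-\{u\}}(L)) = F(R_{-V_1}(L)) \succ_{\{u\}} F(P)$, so $u$ individually manipulates at $P$ via $L$.

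It remains to verify that $L$ is a safe strategic vote for $u$ at $P$, and this is the step I expect to carry the real weight. Writing $V' = (V \setminus V_1) \cup \{u\}$ for the set of $u$'s type at $P$, suppose toward a contradiction that some $V_2' \subseteq V'$ with $u \in V_2'$, all of whose members have an incentive at $P$, gives $F(P) \succ_{V'} F(P_{-V_2'}(L))$. Put $V_2'' = (V_1 \setminus \{u\}) \cup V_2'$, so that $V_2'' \supseteq V_1 \ni i$ and $F(R_{-V_2''}(L)) = F(P_{-V_2'}(L))$; since $V' \subseteq V$ are all of one type, $\succ_{V'}$ and $\succ_V$ coincide, and $F(P) = F(R)$, whence $F(R) \succ_V F(R_{-V_2''}(L))$. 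The one thing left to check—and the crux of the argument—is that every member of $V_2''$ has an incentive to vote $L$ \emph{at $R$}: members of $V_1 \setminus \{u\}$ do because $V_1$ witnesses it, and for $v \in V_2'$ one pulls back its $P$-incentive, witnessed by some $C \subseteq V'$ with $v \in C$ and $F(P_{-C}(L)) \succ_{V'} F(P)$, to the coalition $(V_1 \setminus \{u\}) \cup C \subseteq V$, which contains $v$ and satisfies $F(R_{-((V_1\setminus\{u\})\cup C)}(L)) \succ_V F(R)$. Then $V_2''$ is a coalition containing $i$ with every member sharing the incentive, witnessing $F(R) \succ_V F(R_{-V_2''}(L))$ and thereby contradicting the safety of $L$ at $R$. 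Hence $L$ is safe for $u$ at $P$, completing the deduction. The main obstacle throughout is bookkeeping: keeping the shifting ``same type'' sets straight across $R$ and $P$, and confirming that any threat to safety at $P$ genuinely pulls back to a violation of safety at $R$.
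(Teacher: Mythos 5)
Your proof is correct and follows essentially the same strategy as the paper's: both arguments pass from the safe group incentive guaranteed by Theorem~\ref{ASSWtheorem} to an intermediate profile of the form $R_{-W}(L)$ at which the outcome is still $F(R)$, exhibit a voter of the original type who is individually pivotal there, and then verify safety at the intermediate profile by pulling every threatening coalition back to a coalition at $R$ that would contradict the safety of $L$ there. The only real difference is the extremal device --- you take a minimum-cardinality witnessing coalition $V_1$ and peel off one voter $u$, whereas the paper takes a maximal ``inert'' subset and adjoins a voter outside it --- and your minimality argument arguably delivers the key dichotomy $F(P)=F(R)$ a little more directly.
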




\section{Geometry of overshooting. Further examples}
\label{sec:examples}

In this section we aim to achieve the following. Firstly, we give a geometric interpretation of safe and unsafe manipulation (overshooting) for scoring rules. Then we give an example of undershooting. It is a bit more trickier to construct than examples of overshooting and we need at least four alternatives for this. Finally we will present a profile which is unsafely but not safely manipulable. 


\begin{example}[The Borda rule; safe and unsafe manipulations]
\label{Borda_geom}
Suppose 94 voters are choosing one of $A$, $B$, or $C$.  Let the table below give the distribution of sincere preferences.

\begin{center}
\begin{tabular}{lcccccc}
\arrayrulecolor{Gray}
Preference order & $ABC$ & $ACB$ & $BAC$ & $BCA$ & $CAB$ & $CBA$ \\
Number of voters & 17 & 15 & 18 & 16 & 14 & 14
\end{tabular}
\end{center}

\noindent  If all voters are sincere then $A$ will score 96, $B$ 99, and $C$ 87, and $B$ would win.  If between four and eight $ABC$ types vote $ACB$, ceteris paribus, $A$ would win.  If 10 or more $ABC$ types vote $ACB$, ceteris paribus, $C$ would win.  So the profile of sincere preferences is prone to unsafe strategic voting.  We can express these outcomes using the terminology introduced in Definition~2.  Let $R$ denote the profile of sincere preferences, $V$ the set of 17 $ABC$ types, and let $V_1 \subsetneq V_2 \subseteq V$ be such that $4 \le |V_1| \le 8$ and $10 \le |V_2|$.  Then (compare Definition~2) every voter in $V_2$ has an incentive to strategically vote $ACB$, and
\[
F(R_{-V_1}(ACB)) \succ_V F(R) \succ_V F(R_{-V_2}(ACB)).
\]   
The profile of sincere preferences is also prone to safe strategic voting (by voters of a different type): if 13 or more $ACB$ voters vote $CAB$, ceteris paribus, then $C$ (rather than their least favorite $B$) will win.
\end{example}

We will build on the ideas of~\cite{Saari1994} to illustrate this example geometrically. Firstly, we normalise the Borda scores of $A,B,C$ so that the sum of normalised scores is $1$.  Let $scn(X)$ denote the normalised score of alternative $X$.  Consider the three-dimensional simplex $S^2$ with vertices labeled $A,B,C$.  A ballot outcome can be represented by the point ${\bf x}$ of $ S^{2}$ for which $x_1=scn(A)$, $x_2=scn(B)$ and $x_3=scn(C)$, and where $x_1$, $x_2$, and $x_3$ are the lengths of perpendiculars dropped from ${\bf x}$ onto $BC$, $AC$, and $AB$, respectively (i.e., $x_1,x_2,x_3$ are the barycentric coordinates of ${\bf x}$):
\bigskip

\vspace{-5mm}
\begin{center}
\beginpicture
\setcoordinatesystem units
<0.7mm,0.7mm>
\setplotarea x from -60 to 60, y from -5 to 60
 {
\setlinear
\plot 0 0 60 0 30 51.9 0 0 /
\put{$A$} at -3 -3
\put{$B$} at 63 -3
\put{$C$} at 30 56
\put {$\bullet$} at 34 15
\put{ ${\bf x}$} at 34 20
\plot 34 0 34 15 47 23 /
\plot 34 15 16 27  /
\put{ $x_1$} at 42.5 16.5
\put{ $x_2$} at 22 19
\put{ $x_3$} at 29 8
}
\endpicture
\end{center}
\centerline{\textbf{Figure 1}: Scores as homogeneous barycentric coordinates}\par\bigskip
\bigskip

\noindent The points of $S^2$ realisable as vectors of normalised Borda scores must lie within the region shaded in Figure~2 (below).  That shaded region is divided into three pentagons.  Whenever the vector of normalised scores falls into the pentagon closest to the vertice labelled $Y$, alternative $Y$ wins.

\vbox{
\begin{center}
\beginpicture 
\setcoordinatesystem units
<0.7mm,0.7mm>
\setplotarea x from -60 to 60, y from -5 to 60
\setlinear
\plot 0 0 60 0 30 51.9 0 0 /
\plot 30 0 30 17.3 45 26 /
\plot 30 17.3 15 26 /
\put{$A$} at -3 -3
\put{$B$} at 63 -3
\put{$C$} at 30 56
\setlinear
\plot 40 0  50 17.32 / 
\plot 40 34.64 20 34.64 /
\plot 10 17.32  20 0 /
\setshadegrid span <1.5pt>
\hshade 0 20 40 <,z,,>  17.32 10 50 /
\hshade 17.32 10 50 <,z,,>  34.64 20 40 /
\arrow <3mm> [.2,.5] from 31.5 11 to 24.0 24.5 
\arrow <3mm> [.2,.5] from 32.0 11 to 42.2 28 
\put {$\bullet$} at 31.75 11 
\put{${\bf x}$} at 32.75 7 
\endpicture
\end{center}
\centerline{\textbf{Figure 2}: Unsafe and safe manipulations}
}
\bigskip

\noindent We return to Example~\ref{Borda_geom}.  The vector of scores arising from the profile of sincere preferences is represented in Figure~2 by the point ${\bf x}$.  That point lies in the pentagon in which $B$ wins.  If some $ABC$ types vote $ACB$ then they move the outcome northwest, parallel to $BC$ (the score of $A$ remains unchanged), and into the region where $A$ wins.  If too many $ABC$ types vote $ACB$, the outcome moves all the way into the region where $C$ wins.  If, instead, all the $ABC$ types report their sincere preferences while some $ACB$ types vote $CAB$ then the outcome moves northeast from ${\bf x}$, parallel to $AC$ (the score of $B$ is unchanged), and, possibly, into the region where $C$ is the winning alternative.

To construct an example of strategic undershooting under a social choice rule, we need to have at least four alternatives present.

\begin{example}[The Borda rule; strategic undershooting]
Suppose $41$ voters are using the Borda rule to select one of five alternatives.  Let sincere preferences be distributed as follows.

\begin{center}
\begin{tabular}{lcccc}
\arrayrulecolor{Gray}
Preference order &  $ABCDE$ & $CEBAD$ & $EBCAD$ & $EDACB$   \\
Number of voters & 10 & 15 & 14 & 2  
\end{tabular}
\end{center}

\noindent   When all voters state their sincere preferences, $A$ scores 59, $B$ 102, $C$ 110, $D$ 30, and $E$ 109; $C$ wins.  If between two and six $ABCDE$ types vote $BADCE$, ceteris paribus, then $E$ wins.  If eight or more $ABCDE$ types vote $BADCE$, ceteris paribus, then $B$ wins.
\end{example}


\begin{example}[Anti-plurality/2-approval; a profile that is unsafely but not safely manipulable]
\noindent  Let the distribution of sincere preferences over the three alternatives on offer be as follows.

\begin{center}
\begin{tabular}{lcccccc}
\arrayrulecolor{Gray}
Preference order & $ABC$ & $ACB$ & $BAC$ & $BCA$ & $CAB$ & $CBA$ \\
Number of voters & 8 & 4 & 7 & 5 & 4 & 5
\end{tabular}
\end{center}

\noindent If none of the 33 voters are strategic then $A$ scores 23, $B$ 25, and $C$ 18, and $B$ wins.  At the profile of sincere preferences, only voters of type $ABC$ have an incentive to vote strategically.  If either 3 or 4 of them vote $ACB$, ceteris paribus, then $A$ will win.  If 6 or more of them vote $ACB$, ceteris paribus, $C$ will win.
\end{example}

\section{Proof of Theorem~\ref{ASSWtheorem}: the road map and the building blocks}
\label{sec:tools}

The proof of Theorem~\ref{ASSWtheorem} is split over three sections, so a road map may be helpful.  Here in Section~\ref{sec:tools} we collect technical statements that will be used frequently later on.  In Section~\ref{sec:threealternatives} we prove Theorem~\ref{ASSWtheorem} for the case of three alternatives.  In Lemma~\ref{tricky} we deal first  with the case of two voters,  then  the general case will be dealt with in Lemma~\ref{3Aweaklyu}.  
Section~\ref{sec:fouralternatives} looks at the situation when three or more alternatives are present.  At this stage of the proof we pay a special attention to completely agreed profiles, i.e., those profiles in which all linear orders coincide. There, we first  prove our theorem for the case that at least two alternatives are missing from the set of alternatives that can be elected at completely agreed profiles (Proposition~\ref{theorem2fromOS2}).  Two short Propositions~\ref{short3} and~\ref{ppr}, prepair us for the final thrust in the proof by induction dealing with two particular cases of Theorem~\ref{ASSWtheorem}.  Then, in a series of six short claims, we complete the proof of Theorem~\ref{ASSWtheorem}.  Section~\ref{sec:fouralternatives} finishes with a proof of Theorem~\ref{extendGS}.


We now need the following 
definition.  Let $F$ be a social choice rule, $R$ a profile, $L$ a preference order over the alternatives, and $V$ the entire set of voters having some particular preference order at $R$, common for all of them and different from $L$.  Then a subset $V_1 \subsetneq V$ will be classified as {\em L-inferior} if $$F(R_{-V}(L)) \succ_V F(R_{-V_1}(L)).$$  In other words, $V_1$ is $L$-inferior if voters from $V$ will be strictly better off if all of them switch to voting $L$ rather than only voters from $V_1$ do this.


\begin{proposition}
\label{inferiorproposition}
Let $F$ be a social choice rule.  Fix a  profile $R$ and a preference order  $L$, both over the set of alternatives $\cal A$.  Suppose $V$ is the entire set of voters who at $R$ have some preference order other than $L$.  If $V$ has an $L$-inferior subset then $F$ is safely manipulable.
\end{proposition}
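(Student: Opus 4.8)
The plan is to single out one especially favorable profile by an extremal argument and then show that at it some voter has a \emph{safe} incentive to vote $L$. Write $P$ for the common preference order of the voters in $V$ (so $P \ne L$), and consider the family of all $L$-inferior subsets of $V$, ordered by inclusion. By hypothesis this family is nonempty, and since $V$ is finite it has a maximal element $V_1$; note $V_1 \subsetneq V$ by the very definition of $L$-inferiority. I would then form the profile $R' = R_{-V_1}(L)$ and focus on the set $W = V \setminus V_1$ of voters who at $R'$ still report $P$. Because $V$ was the \emph{entire} set of voters of type $P$ at $R$, the set $W$ is exactly the set of voters sharing the preferences of any $j \in W$ at $R'$. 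The goal is to exhibit a $j \in W$ for whom $L$ is a safe strategic vote at $R'$.

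First I would verify the incentive. Switching every voter of $W$ to $L$ converts $R'$ into $R_{-V}(L)$, since the voters of $V_1$ already report $L$ in $R'$. The $L$-inferiority of $V_1$ gives $F(R_{-V}(L)) \succ_V F(R_{-V_1}(L)) = F(R')$, and restricting this to $W \subseteq V$ yields $F(R'_{-W}(L)) \succ_W F(R')$. Hence any $j \in W$ has an incentive to vote $L$ at $R'$, witnessed by the subset $W$ itself.

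The crux is safety, and this is where maximality does the work. Suppose, for contradiction, that $L$ is unsafe for $j$ at $R'$. Then there is a subset $W_2 \subseteq W$ with $j \in W_2$ and $F(R') \succ_W F(R'_{-W_2}(L))$. Observe that $R'_{-W_2}(L) = R_{-V_1'}(L)$, where $V_1' = V_1 \cup W_2$; moreover $V_1 \subsetneq V_1'$ because $j \in W_2$ while $j \notin V_1$. Both of the strict preferences $F(R_{-V}(L)) \succ F(R_{-V_1}(L))$ and $F(R_{-V_1}(L)) \succ F(R_{-V_1'}(L))$ are read off the single order $P$, so transitivity gives $F(R_{-V}(L)) \succ_V F(R_{-V_1'}(L))$. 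If $V_1' \subsetneq V$ this says $V_1'$ is an $L$-inferior subset strictly containing $V_1$, contradicting maximality; if instead $V_1' = V$ the same chain reads $F(R_{-V}(L)) \succ F(R_{-V}(L))$, which is absurd. Either way we reach a contradiction, so $L$ is safe for $j$ at $R'$. Combined with the incentive, this shows $j$ has an incentive to cast a safe strategic vote, i.e.\ $F$ is safely manipulable.

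The only real obstacle is spotting that one should pass to a \emph{maximal} $L$-inferior subset rather than work with the given one: for an arbitrary $L$-inferior $V_1$ the vote $L$ at $R_{-V_1}(L)$ may well be unsafe, but the computation above shows that each such failure enlarges the inferior set, so the failure cannot recur at a maximal one. I note in passing that the argument never uses the clause ``all voters in $W_2$ have an incentive to strategically vote $L$'' from the definition of an unsafe vote; discarding it only strengthens the hypothesis being contradicted, so nothing is lost.
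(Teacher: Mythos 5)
Your proof is correct and follows essentially the same route as the paper: pass to a maximal $L$-inferior subset $V_1$, move to the profile $R_{-V_1}(L)$, and use maximality to show the remaining voters of the original type have a safe incentive to vote $L$. The only cosmetic difference is that you establish safety by contradiction, whereas the paper argues directly that for every nonempty $V_2 \subseteq V - V_1$ one has $F(R_{-(V_1\cup V_2)}(L)) \succeq_V F(R_{-V}(L)) \succ_V F(R_{-V_1}(L))$; the two arguments are contrapositives of one another.
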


\begin{proof}
Let $V_1$ be a maximal element of the set of $L$-inferior subsets of $V$ partially ordered by inclusion.  We claim that were $R_{-V_1}(L)$ the profile of sincere preferences then it would be safely manipulable by the voters in $V -  V_1$.  At the profile $R_{-V_1}(L)$, the voters in $V -  V_1$ are the sole voters present with their particular preferences.  If $\emptyset\ne V_2 \subseteq V -  V_1$, then $V_1\cup V_2$ is not $L$-inferior (as $V_1$ was maximal with this property) and one has
\[
F (  R_{-V_1}(L)_{-V_2}(L) ) = F(R_{-(V_1\cup V_2)}(L)) \succeq_V F(R_{-V}(L)) \succ_V F(R_{-V_1}(L)).
\]
This implies voters in $V -  V_1$,  at $R_{-V_1}(L)$, have incentive to strategically vote $L$ and can do it safely.
\end{proof}


\begin{proposition}
\label{endup}
Let $F$ be a social choice rule.  Suppose that, at a profile $R$, voter~$i$ has an incentive to strategically vote $L \ne R_i$.  Let $V$ be the entire set of voters with preferences identical to voter~$i$ at $R$.  If $F(R_{-V}(L)) \succeq_V F(R)$ then $F$ is safely manipulable. 
\end{proposition}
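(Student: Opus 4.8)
The plan is to argue by a simple dichotomy on whether the strategic vote $L$ is itself safe for voter~$i$ at $R$. By hypothesis $i$ already has an incentive to vote $L$, so if $L$ happens to be safe for $i$ then $F$ is safely manipulable directly from the definition of safe manipulability, and there is nothing further to prove. The entire content of the proposition therefore lies in the complementary case, where $L$ is \emph{unsafe} for $i$.

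Unwinding the definition of an unsafe strategic vote in that case, there is a subset $V_2 \subseteq V$ with $i \in V_2$, every voter of $V_2$ sharing the incentive to vote $L$, and $F(R) \succ_V F(R_{-V_2}(L))$. I would then splice this against the standing hypothesis $F(R_{-V}(L)) \succeq_V F(R)$. Because all voters in $V$ are of one type, $\succeq_V$ and $\succ_V$ refer to a single common strict order, so chaining a weak step with a strict step yields $F(R_{-V}(L)) \succ_V F(R_{-V_2}(L))$.

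The key bookkeeping step is to confirm that $V_2$ is a \emph{proper} subset of $V$: were $V_2 = V$, we would obtain $F(R_{-V}(L)) \succ_V F(R_{-V}(L))$, which is absurd, so in fact $V_2 \subsetneq V$. With the strict comparison $F(R_{-V}(L)) \succ_V F(R_{-V_2}(L))$ in hand, $V_2$ meets the definition of an $L$-inferior subset of $V$. A single application of Proposition~\ref{inferiorproposition} then delivers safe manipulability of $F$, completing the argument.

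I expect the only point needing genuine care — rather than a true obstacle — to be the correct combination of the weak relation $\succeq_V$ with the strict relation $\succ_V$, together with the verification that the subset $V_2$ produced by the unsafeness hypothesis is proper; everything else is definitional unwinding plus the one appeal to Proposition~\ref{inferiorproposition}.
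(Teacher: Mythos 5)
Your proof is correct and follows essentially the same route as the paper's: split on whether $L$ is already safe for $i$, and otherwise chain $F(R_{-V}(L)) \succeq_V F(R) \succ_V F(R_{-V_2}(L))$ to exhibit an $L$-inferior subset and invoke Proposition~\ref{inferiorproposition}. Your explicit check that $V_2 \subsetneq V$ is a detail the paper's own proof passes over silently, but the argument is the same.
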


\begin{proof}
If $R$ is not safely manipulable, there exists a nonempty subset $V_1 \subsetneq V$ such that $F(R) \succ_V F(R_{-V_1}(L))$.  We  have $F(R_{-V}(L)) \succeq_V F(R) \succ_V F(R_{-V_1}(L))$.  So $V_1$ is an $L$-inferior subset of $V$, and by Proposition~\ref{inferiorproposition}, $F$ is safely manipulable.
\end{proof}


The following proposition will be of great help each time we need to show some social choice rule is safely manipulable; it allows not to consider cases of undershooting.

\begin{proposition}
\label{onlyundershooting}
Let $F$ be a social choice rule.  Suppose that at the profile $R$ voter $i$ has an incentive to cast the  strategic vote $L$ and it is unsafe.  Then either $F$ is safely manipulable or at some profile a voter can strategically overshoot.
\end{proposition}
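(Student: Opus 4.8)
The plan is to prove the statement in a contrapositive flavour: assuming $F$ is \emph{not} safely manipulable, I will exhibit a profile at which a voter can strategically overshoot, and in fact it will be the profile $R$ itself, with voter $i$ overshooting by voting $L$. The hypotheses hand me two coalitions inside the type-class $V$ of voter $i$: a coalition $V_1 \ni i$ witnessing the incentive, so that $F(R_{-V_1}(L)) \succ_V F(R)$, and a coalition $V_2 \ni i$ witnessing unsafety, so that $F(R) \succ_V F(R_{-V_2}(L))$. The first thing I would observe is that $V_1$ and $V_2$ need not be nested, so one cannot read an over- or undershooting directly off the pair $(V_1,V_2)$. The device that resolves this is to promote the \emph{whole} type-class $V$ to the role of the harmful ``too many'' coalition.

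Concretely, I would first apply Proposition~\ref{endup}. Voter $i$ has an incentive to vote $L$ at $R$, so if $F$ is not safely manipulable then Proposition~\ref{endup} rules out $F(R_{-V}(L)) \succeq_V F(R)$; since $\succeq_V$ subsumes equality, this yields the strict relation $F(R) \succ_V F(R_{-V}(L))$. In words, switching the entire type-class to $L$ makes the voters in $V$ strictly worse off than the status quo, so the maximal coalition is harmful.

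Next I would splice this together with the incentive, obtaining $F(R_{-V_1}(L)) \succ_V F(R) \succ_V F(R_{-V}(L))$. Because the two extreme outcomes differ, $V_1 \neq V$, and since $V_1 \subseteq V$ this forces $V_1 \subsetneq V$; moreover $i \in V_1 \subseteq V$, and every voter in $V_1$ has an incentive to vote $L$, witnessed by $V_1$ itself. Thus the nested pair $V_1 \subsetneq V$ is exactly an instance of strategic overshooting at $R$: the smaller coalition $V_1$ improves on $F(R)$, while the larger coalition, all of $V$, does strictly worse, so ``too many'' voters acting on the common incentive is harmful. This is the desired overshooting, completing the dichotomy.

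The only real obstacle is the non-comparability of the incentive coalition and the unsafety coalition noted at the outset; the single idea that unlocks the proof is to discard $V_2$ and let the maximal coalition $V$ play the harmful set, with Proposition~\ref{endup} certifying that it is harmful once safe manipulability has been excluded. After that move the strictness $V_1 \subsetneq V$ is automatic (an alternative cannot strictly beat itself) and the incentive condition on $V_1$ is immediate, so no further work is required; in particular the unsafety hypothesis only serves to place us in the relevant regime, the derivation itself resting on the incentive together with Proposition~\ref{endup}.
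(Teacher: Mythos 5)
Your derivation of the outcome chain is correct: Proposition~\ref{endup} does force $F(R)\succ_V F(R_{-V}(L))$ once safe manipulability is excluded, and splicing in the incentive witness gives $F(R_{-V_1}(L))\succ_V F(R)\succ_V F(R_{-V}(L))$ with $V_1\subsetneq V$. But this is not yet an instance of strategic overshooting as defined. In both the formal definition and its use in Example~\ref{Borda_geom}, the clause ``every voter in $V_2$ has an incentive to strategically vote $L$'' attaches to the \emph{harmful} coalition, the one satisfying $F(R)\succ_V F(R_{-V_2}(L))$. Your harmful coalition is all of $V$, and you only verify the incentive for the members of $V_1$. A voter $j\in V\setminus V_1$ need not have any incentive to vote $L$ at $R$: nothing guarantees the existence of a coalition containing $j$ whose switch to $L$ improves the outcome. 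Your closing remark that the unsafety hypothesis ``only serves to place us in the relevant regime'' is exactly where the loss occurs --- the unsafety witness is the only object in sight that hands you a harmful coalition all of whose members are guaranteed to hold the incentive, and you discard it.

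The paper closes this gap with a different construction. It takes a maximal subset $V_2'$ of the incentive witness $V_1$ for which $F(R)\succeq_V F(R_{-V_2'}(L))$, produces a voter $j\in V_1\setminus V_2'$ who is individually pivotal at the shifted profile $R_{-V_2'}(L)$, and asks whether $j$'s one-voter manipulation there is safe. If it is unsafe, the unsafety witness $W$ at that profile is a harmful coalition whose members all have the incentive \emph{by the definition of unsafety}, and it properly contains the beneficial singleton $\{j\}$ automatically (since $\{j\}$ itself is beneficial, $W\neq\{j\}$). The singleton trick thus delivers both the nesting and the incentive clause for free, which is precisely what your pair $(V_1,V)$ cannot do. To repair your argument you would still need to descend to a pivotal voter, or otherwise manufacture a harmful coalition consisting entirely of incentive-holders, at which point you have essentially reproduced the paper's proof.
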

\begin{proof}
Let $V = \{ j \mid R_j = R_i \}$.  Let $V_1 \subseteq V$ be any subset such that $i \in V_1$ and $F(R_{-V_1}(L)) \succ_V F(R)$. Such subset exists since $i$ has incentive to vote strategically.  Partially order the subsets of $V_1$ by inclusion. Let us consider the set of subsets $U\subset V_1$ such that $F(R) \succeq_V F(R_{-U}(L))$. Such set is nonempty since the empty set belongs to it.
Let $V_2$ be a maximal subset in this set. The subset  $V_2$, then must be a proper subset of $V_1$.  So we can find $j \in V_1 - V_2$, and for this voter, $$F((R_{-V_2}(L))_{-\{j\}}(L))=F(R_{-V_2 \cup \{ j \}}(L)) \succ_V F(R).$$  Now either voting $L$ is safe for $j$ at $R_{-V_2}(L)$ or it is unsafe; if the latter it must be because, at $R_{-V_2}(L)$, voter $j$ could strategically overshoot.
\end{proof}


We will call a social choice rule $F$ {\em antagonistic} if there exists a profile at which every voter ranks a particular alternative $X \in \cal A$ last, yet the value of $F$ at that profile is precisely $X$.

\begin{proposition}
\label{antagonisticproposition}
A non-constant antagonistic  social choice rule $F$  is safely manipulable.  
\end{proposition}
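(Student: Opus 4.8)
The plan is to reduce everything to a single application of Proposition~\ref{endup}. Fix the antagonistic profile guaranteed by the hypothesis: a profile $R$ at which every voter ranks some alternative $X$ last and $F(R)=X$. The crucial feature of such a profile is that $X$ is the \emph{unanimously worst} alternative, so for any set of voters $V$ present at $R$ and any alternative $Y\ne X$ we automatically have $Y\succ_V X$. In particular, the comparison $F(\cdot)\succeq_V F(R)$ appearing in the hypothesis of Proposition~\ref{endup} will hold for free, as long as the relevant type still ranks $X$ last and the current outcome is $X$. Thus all I need to produce is a profile with outcome $X$ at which a single type can change the outcome to something other than $X$.

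To locate such a profile I would use non-constancy. Since $F$ is not constant there is a profile $P$ with $F(P)\ne X$. I will interpolate between $R$ and $P$ by changing voters one at a time: enumerate the voters and let $Q_t$ be the profile in which voters $1,\dots,t$ hold their $P$-order and voters $t+1,\dots,n$ hold their $R$-order, so that $Q_0=R$ and $Q_n=P$. Because $F(Q_0)=X$ and $F(Q_n)\ne X$, there is a first index $t^\ast$ with $F(Q_{t^\ast})\ne X$; write $Q:=Q_{t^\ast-1}$ and $j:=t^\ast$, so that $F(Q)=X$ and $Q_{-\{j\}}(L)=Q_{t^\ast}$, where $L:=P_j$. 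Since the outcome actually changes at this step we must have $L\ne R_j$, and at $Q$ voter $j$ still holds the order $R_j$, which ranks $X$ last.

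It then remains to verify the hypotheses of Proposition~\ref{endup} at $Q$ for voter $j$ and the vote $L$. Let $V$ be the set of voters of $j$'s type at $Q$; by definition of type every member of $V$ holds the order $R_j$ and hence ranks $X$ last. Taking $V_1=\{j\}$ we get $F(Q_{-\{j\}}(L))=F(Q_{t^\ast})\ne X$, and since $X$ is worst for all of $V$ this yields $F(Q_{-\{j\}}(L))\succ_V F(Q)$, so $j$ has an incentive to strategically vote $L$ at $Q$. Moreover $F(Q_{-V}(L))\succeq_V X=F(Q)$ holds trivially, because $X$ is the bottom alternative for every voter in $V$. Proposition~\ref{endup} then delivers that $F$ is safely manipulable.

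I expect the only real subtlety---rather than an obstacle---to be the bookkeeping that confirms the whole type $V$ still ranks $X$ last at the intermediate profile $Q$; this is immediate here, since membership of $V$ forces a voter to carry exactly the order $R_j$, and $R_j$ ranks $X$ last by the choice of $R$. The conceptual point that makes the proof short is that antagonism turns the otherwise delicate `$\succeq_V$' requirement of Proposition~\ref{endup} into a triviality, so non-constancy only has to be used to pin down one outcome-changing single-voter step.
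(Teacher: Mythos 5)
Your proposal is correct and follows essentially the same route as the paper: interpolate voter-by-voter from the antagonistic profile to a profile with a different outcome, locate the single-voter step where the outcome first leaves $X$, and observe that since $X$ is ranked last by the manipulating voter's entire type, no mis-coordination can produce anything worse. The only cosmetic difference is that you route the final step through Proposition~\ref{endup}, whereas the paper asserts safety of that switch directly; the underlying argument is identical.
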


\begin{proof}
Let $R$ be the profile at which $A$ (without loss of generality) is last in every order $R_i$ but $F(R)=A$.  The rule $F$ is not constant, so let $Q$ be a profile such that $F(Q) \ne A$.  Take $R$ and for $i = 1, 2, \ldots$, one by one change $R_i$ to $Q_i$.  Were it a profile of sincere preferences, the last profile encountered for which $F$ does not take the value $A$ will clearly be safely manipulable. 
\end{proof}


The following construction, which reduces an arbitrary social choice rule $F$ to a two-voter rule, will be frequently used.  Let $V_1$ and $V_2$ be two non-empty non-intersecting subsets that partition the set of voters, i.e., $V_1\cup V_2=[n]$.  The value of the {\em two-voter social choice rule} $F_{V_1,V_2}$ at the two-voter profile ($R_1, R_2$) shall be the value of $F$ when all voters in $V_1$ report their preferences to be $R_1$ and all voters in $V_2$ report their preferences to be $R_2$.  We now have to aggregate voters for $F_{V_1,V_2}$. To distinguish them from those in the original full set $[n]$ they shall be denoted $V_1$ and $V_2$.

\begin{proposition}
\label{ifpartition}
Let $F$ be a social choice rule.  Let $V_1$ and $V_2$ form a non-trivial partition of $[n]$.  If $F_{V_1,V_2}$ is safely manipulable then so is $F$.
\end{proposition}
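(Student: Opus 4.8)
The plan is to lift a witness for the safe manipulability of $F_{V_1,V_2}$ up to $F$ by ``blowing up'' the two aggregate voters into their blocs. Since $F_{V_1,V_2}$ is safely manipulable there is a two-voter profile $(\sigma_1,\sigma_2)$, an aggregate voter — say $V_1$ — and a strategic vote $L$ that is safe for $V_1$ at $(\sigma_1,\sigma_2)$. Write $W_1,W_2\subseteq[n]$ for the two blocs (the sets partitioning $[n]$), reserving the names $V_1,V_2$ for the aggregate voters. I would first consider the profile $R$ of $F$ in which every voter of $W_1$ reports $\sigma_1$ and every voter of $W_2$ reports $\sigma_2$; then $F(R)=F_{V_1,V_2}(\sigma_1,\sigma_2)$ and every whole-bloc deviation of $F$ is evaluated by the corresponding move of $F_{V_1,V_2}$. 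The incentive of $V_1$ to vote $L$ transfers to each voter of $W_1$ through the coalition $W_1$ itself.

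Next I would dispose of the routine case. If $\sigma_1\neq\sigma_2$ then the set of voters sharing $W_1$'s type at $R$ is exactly $W_1$, and $F(R_{-W_1}(L))=F_{V_1,V_2}(L,\sigma_2)\succ_{W_1}F(R)$, so switching the whole type-group strictly improves the outcome and Proposition~\ref{endup} yields safe manipulability of $F$ immediately. The same conclusion holds whenever the incentive recorded by $F_{V_1,V_2}$ can be realised by both aggregate voters switching, that is whenever $F_{V_1,V_2}(L,L)\succeq F_{V_1,V_2}(\sigma_1,\sigma_2)$ in the voters' common order: switching all of $[n]$ to $L$ then does not worsen the outcome, an incentive is present, and Proposition~\ref{endup} applies again.

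This isolates the genuinely hard case: $\sigma_1=\sigma_2=\sigma$, the incentive comes solely from $V_1$ switching (so $F_{V_1,V_2}(L,\sigma)\succ_\sigma F(R)$), yet over-switching strictly worsens the outcome, $F(R)\succ_\sigma F_{V_1,V_2}(L,L)$; here safety in the two-voter rule holds only because $V_2$ has no incentive, which forces $F(R)\succeq_\sigma F_{V_1,V_2}(\sigma,L)$. The main obstacle lives exactly here. At $R$ every voter is of type $\sigma$, so the type-group is all of $[n]$ and straddles both blocs, whereas splitting $W_1$ produces profiles about which $F_{V_1,V_2}$ says nothing. The atomic-bloc safety of the two-voter witness need not survive this splitting, and the loss from over-switching is real, so one cannot simply quote Proposition~\ref{endup}.

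To tame this I would argue not at $R$ but at an extremal profile. Let $S_\ast\subseteq[n]$ minimise $F(R_{-S}(L))$ in $\sigma$'s order and regard $R'=R_{-S_\ast}(L)$ as sincere; since $L\neq\sigma$, the type-$\sigma$ voters at $R'$ are exactly $[n]\setminus S_\ast$, and switching all of them to $L$ returns $F(R_{-[n]}(L))$, which is no worse than the minimum $F(R')$ — so the whole-group deviation no longer worsens the outcome. If some such voter still has an incentive to vote $L$, Proposition~\ref{endup} finishes the argument; and if instead every enlargement of $S_\ast$ leaves the outcome at the minimum, that equality exposes an $L$-inferior subset, which Proposition~\ref{inferiorproposition} converts into a safe manipulation, or else forces me to relocate to the profile realising the pinned value $F_{V_1,V_2}(L,\sigma)$ and exhibit a safe escape there, exploiting that this value is strictly preferred to the status quo by voters of a suitable type. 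Throughout, Proposition~\ref{onlyundershooting} lets me disregard undershooting, so only the overshooting geometry just described must be handled; carrying out this extremal bookkeeping cleanly is the real work of the proof.
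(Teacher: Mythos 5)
Your first two reductions coincide with the paper's proof: blow the two aggregate voters up into their blocs, observe that the case $\sigma_1\ne\sigma_2$ and the case $F_{V_1,V_2}(L,L)\succeq F_{V_1,V_2}(\sigma,\sigma)$ are both finished by Proposition~\ref{endup}. The paper in fact stops there, because it asserts that safety of $L$ for the aggregate voter $V_1$ at a completely agreed two-voter profile already forces the two inequalities of~(\ref{alltheway}), i.e.\ forces $F_{V_1,V_2}(L,L)\succeq F_{V_1,V_2}(\sigma,\sigma)$. You are reading the definition of an unsafe vote more literally (the pair $\{V_1,V_2\}$ only witnesses unsafety if \emph{both} aggregate voters have an incentive to vote $L$), which is why you isolate the residual configuration $F_{V_1,V_2}(L,\sigma)\succ_\sigma F_{V_1,V_2}(\sigma,\sigma)\succ_\sigma F_{V_1,V_2}(L,L)$ with $F_{V_1,V_2}(\sigma,L)\preceq_\sigma F_{V_1,V_2}(\sigma,\sigma)$. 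That is a legitimate observation about the definitions, but it puts the burden on you to actually close this case, and your argument for it does not go through.

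Concretely: pass to $R'=R_{-S_\ast}(L)$ with $S_\ast$ minimising $F(R_{-S}(L))$, and suppose, as in your second branch, that no voter of $[n]\setminus S_\ast$ has an incentive to vote $L$ at $R'$, so that every superset of $S_\ast$ is mapped to the minimal value $m$. Then in particular $F(R_{-[n]}(L))=m$, and since a set $S$ is $L$-inferior (relative to $R$ and the type group $[n]$) precisely when $F(R_{-[n]}(L))\succ_\sigma F(R_{-S}(L))$, \emph{no} $L$-inferior subset exists; the same computation kills $L$-inferiority relative to $R'$ and the group $[n]\setminus S_\ast$. So Proposition~\ref{inferiorproposition} yields nothing in exactly the branch where you invoke it. The final fallback --- ``relocate to the profile realising $F_{V_1,V_2}(L,\sigma)$ and exhibit a safe escape there'' --- is not an argument: at $R_{-W_1}(L)$ the outcome is the alternative the $\sigma$-types rank \emph{highest} among the values in play, an escape requires the current outcome to be bottom-ranked by the escaping type, and you have no information about how $L$ ranks that alternative. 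So the hard case you (correctly) flag is left open. Either justify the paper's stronger reading of safety at a completely agreed two-voter profile --- in which case the whole proposition is the two-line application of Proposition~\ref{endup} that the paper gives --- or supply a genuine argument for the residual configuration; as written the proposal proves strictly less than the statement.
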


\begin{proof}
If $F_{V_1,V_2}$ is safely manipulable there exists a linear order $L$ and a two-voter profile $R=(R_1,R_2)$ such that, either $R_1 \ne R_2$ and
\begin{equation}
\label{halfway}
F_{V_1,V_2}(L,R_2)\succ_{\{V_1\}} F_{V_1,V_2}(R_1,R_2),
\end{equation}
or $R_1=R_2$ and
\begin{equation}
\label{alltheway}
F_{V_1,V_2}(L,R_2)\succeq_{\{V_1\}} F_{V_1,V_2}(R_1,R_2),\quad  F_{V_1,V_2}(L,L)\succeq_{\{V_1\}} F_{V_1,V_2}(R_1,R_2)
\end{equation}
with one of the two relations in (\ref{alltheway}) being strict.  Thus, at the two-voter profile $R$, voter $V_1$ can safely strategically vote $L$.  Let $Q$ be the $n$-voter profile for which $Q_v = R_1$ when $v \in V_1$ and $Q_v = R_2$ when $v \in V_2$. Without loss of generality suppose $1\in V_1$. At $Q$, voter~1 has an incentive to vote strategically; with that incentive in mind, we can finish by appealing to Proposition~\ref{endup} using (\ref{halfway}), when $R_1 \ne R_2$, or (\ref{alltheway}), when $R_1 = R_2$.
\end{proof}


A profile $R$ is {\em completely agreed} if all voters have identical preferences at $R$.  When the social choice rule $F$ is clearly fixed, $\mathcal{C}$ will denote the set of values that $F$ takes on completely agreed profiles.    A social choice rule is {\em weakly unanimous} if it selects every voter's favorite alternative at all completely agreed profiles (`weakly' because such a rule will not necessarily select $X$ when all voters report they rank $X$ first).


\section{Proof  of Theorem~\ref{ASSWtheorem}: three alternatives}
\label{sec:threealternatives}

In this section we always assume that $|\mathcal{A}|=3$. We will find it convenient to depict profiles over $\mathcal{A}$ as 6-tuples of sets: $R = (X_1,X_2,X_3,X_4,X_5,X_6)$, where the six sets in the sequence are the set of voters of types  $ABC$, $ACB$, $BAC$, $BCA$, $CAB$, and $CBA$, respectively. 


\begin{proposition}
\label{bac_bca_cba} 
Let $F$ be a social choice rule.  If, at a profile $R$, $ABC$ types can strategically overshoot by voting $BAC$, $BCA$, or $CBA$, then $F$ is safely manipulable.
\end{proposition}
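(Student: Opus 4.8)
The plan is to exploit the rigidity that only three alternatives impose on an overshooting chain, and then to reinterpret one of the profiles involved as a profile of sincere preferences from which a full set of like-minded voters can safely escape. Throughout, write $V$ for the set of $ABC$-types, and note that since all voters in $V$ share the preference $A\succ B\succ C$, the relation $\succ_V$ is simply that order.

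First I would pin down the outcomes. By the definition of overshooting there are subsets $V_1,V_2\subseteq V$, each containing the pivotal voter, with
\[
F(R_{-V_1}(L))\succ_V F(R)\succ_V F(R_{-V_2}(L)).
\]
Because $\succ_V$ is a strict chain on $\{A,B,C\}$ ordered $A\succ B\succ C$, a length-three strictly decreasing sequence forces the three values to be $F(R_{-V_1}(L))=A$, $F(R)=B$, and $F(R_{-V_2}(L))=C$. In particular both $V_1$ and $V_2$ are nonempty.

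The key observation is that each admissible strategic vote ranks one \emph{endpoint} of this chain last: $BAC$ ranks $C$ last, while both $BCA$ and $CBA$ rank $A$ last (this is exactly the property separating $BAC,BCA,CBA$ from the excluded order $CAB$). I would therefore choose the profile $P$ sitting at that endpoint, namely $P=R_{-V_2}(L)$ (winner $C$) when $L=BAC$, and $P=R_{-V_1}(L)$ (winner $A$) when $L\in\{BCA,CBA\}$. Reinterpreting $P$ as a profile of sincere preferences, the voters who had switched to $L$ (that is, $V_2$ resp.\ $V_1$) are genuine type-$L$ voters at $P$, and the winner $F(P)$ is their bottom-ranked alternative. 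Let $W$ be the entire set of type-$L$ voters at $P$; it contains that switched set. Reverting the switched set back to $ABC$ turns $P$ back into $R$, whose winner is $B$, and in each case $B$ beats the winner of $P$ under $L$ (indeed $B\succ_{BAC}C$, while $B\succ_{BCA}A$ and $B\succ_{CBA}A$). Hence some voter in $W$ has, at $P$, an incentive to vote $ABC$.

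To finish, I would invoke Proposition~\ref{endup} with strategic vote $ABC$ and the set $W$. Its hypothesis $F(P_{-W}(ABC))\succeq_W F(P)$ holds automatically: $F(P)$ is ranked last by every voter in $W$, so nothing can be strictly worse for them. Therefore $F$ is safely manipulable. Intuitively, this is a safe escape precisely because one cannot end up below the bottom-ranked alternative. I expect the only real obstacle to be the bookkeeping of the case split, i.e.\ correctly matching each $L$ to the endpoint ($A$ or $C$) of the overshooting chain that $L$ ranks last; once that matching is made and Proposition~\ref{endup} is applied, the remaining verifications are routine.
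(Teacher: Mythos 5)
Your proof is correct and takes essentially the same route as the paper's: both pin down the overshooting chain as $A,B,C$, observe that $BAC$ ranks $C$ last while $BCA$ and $CBA$ rank $A$ last, and then escape from the corresponding endpoint profile back to $R$, the safety being automatic because the current winner is bottom-ranked for the switched voters. Your explicit appeal to Proposition~\ref{endup} is just the formal version of the paper's ``escape'' shorthand, and your use of the overshooting witness set rather than all of $V$ in the $BAC$ case is, if anything, slightly more careful than the paper's wording.
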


\begin{proof}
Let $R$ be a profile and $L \in \{ BAC, BCA, CBA\}$ a linear order such that, if $V$ is the entire set of all voters with preferences $ABC$ at $R$, firstly
\[
 F(R)=B, \quad\text{and}\quad F(R_{-V}(L))=C
 \] 
and, secondly, 
\[
F(R_{-V_1}(L)) = A
\]
for some $V_1 \subset V$.  Then $ABC$ types, at $R$, can overshoot by voting $L$.  If $L=BAC$ then if $R_{-V}(L)$ were the profile of sincere preferences, $BAC$ voters could escape by voting $ABC$.  If $L$ equals $BCA$ then if $R_{-V_1}(L)$ were the profile of sincere preferences, $BCA$ voters could escape by voting $ABC$.  If $L$ equals $CBA$ then, similarly, if $R_{-V_1}(L)$ were the profile of sincere preferences, $CBA$ voters could escape by voting $ABC$.
\end{proof}

We now prove our theorem for the two-voter-three-alternative case.


\begin{lemma}
\label{tricky}
Let $F$ be an onto and non-dictatorial social choice rule.  If $n = 2$ then $F$ is safely manipulable.  If $n \ge 2$ and $V_1$ and $V_2$ form a non-trivial partition of $[n]$ for which $F_{V_1,V_2}$ is onto and non-dictatorial, then $F$ is safely manipulable.
\end{lemma}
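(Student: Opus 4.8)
The plan is to prove the first assertion (the two-voter case) and let the second fall out of it. If $V_1,V_2$ partition $[n]$ and $F_{V_1,V_2}$ is onto and non-dictatorial, then $F_{V_1,V_2}$ is an onto, non-dictatorial two-voter rule on three alternatives, so the first assertion makes it safely manipulable, and Proposition~\ref{ifpartition} lifts this conclusion to $F$. Thus everything reduces to showing that an onto, non-dictatorial two-voter rule $F$ on three alternatives is safely manipulable.

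First I would invoke the Gibbard--Satterthwaite Theorem~\ref{GStheorem} to obtain a profile $R$, a voter $i$, and an order $L$ with $F(R_{-\{i\}}(L))\succ_{\{i\}}F(R)$. If $R_1\neq R_2$, then $V:=\{j:R_j=R_i\}=\{i\}$, so the only coalition $V_2\subseteq V$ containing $i$ is $V$ itself; but $F(R_{-V}(L))\succ_V F(R)$ is exactly the negation of the inequality that unsafety would require, so the strategic vote $L$ is safe and $F$ is safely manipulable. Hence we may assume $R_1=R_2$, a completely agreed profile, where $V=\{1,2\}$ and voter $i$ has an incentive to vote $L$. If $L$ is safe we are done; otherwise Proposition~\ref{onlyundershooting} gives that either $F$ is already safely manipulable or some voter can strategically overshoot at some profile.

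So suppose we have an overshoot. Since $n=2$, the two nested coalitions can only be $\{i\}\subsetneq\{1,2\}$, and, relabelling the alternatives so that the overshooting type is $ABC$, the three strictly ranked outcomes must be the three distinct alternatives listed top-to-bottom for $ABC$; matching the hypotheses of Proposition~\ref{bac_bca_cba}, this reads $F(R)=B$, $F(R_{-\{i\}}(L))=A$, and $F(R_{-\{1,2\}}(L))=C$, with $L\neq ABC$. If $L\in\{BAC,BCA,CBA\}$---equivalently, $L$ does not rank the status-quo winner $B$ last---then Proposition~\ref{bac_bca_cba} applies verbatim (with $V=\{1,2\}$) and $F$ is safely manipulable.

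The crux is the remaining case $L\in\{ACB,CAB\}$, where $L$ ranks $B$ last. Here the obstruction is genuine: at the losing profile $R_{-\{1,2\}}(L)$ the winner $C$ is ranked last by no type present, and at the good profile $R_{-\{i\}}(L)$ the winner $A$ is the top choice of both types present, so the escape that powers Proposition~\ref{bac_bca_cba} is unavailable; moreover the same overshoot seen from type $L$'s own completely agreed profile is again of this forbidden shape, merely transposing the two voters and cycling between the two completely agreed profiles without producing a new one. To break the cycle I would bring in a third preference order and track the outcomes of the mixed profiles $F(ABC,L)$ and $F(L,ABC)$ along with the values of $F$ at the other completely agreed profiles, aiming to show by a finite case analysis that one of three things must happen: $F$ is antagonistic, whence Proposition~\ref{antagonisticproposition} finishes; a profile arises whose winner is ranked last by a voter present there who still has an incentive to deviate, yielding an escape; or an $L$-inferior coalition surfaces, so that Proposition~\ref{inferiorproposition} (or Proposition~\ref{endup}) applies. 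I expect this to be the main difficulty precisely because both canonical escapes are blocked and one is forced to argue with profiles outside the square $\{ABC,L\}^2$; finiteness of the profile space is what guarantees the search terminates.
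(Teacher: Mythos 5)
Your reduction is exactly the one the paper uses: the second assertion follows from the first via Proposition~\ref{ifpartition}; Gibbard--Satterthwaite produces the manipulation; if the two voters have different types the lone manipulator's coalition is forced to be a singleton and the vote is automatically safe; Proposition~\ref{onlyundershooting} reduces unsafety to overshooting; and Proposition~\ref{bac_bca_cba} eliminates $L\in\{BAC,BCA,CBA\}$, leaving $F(ABC,ABC)=B$, $F(L,ABC)=A$, $F(L,L)=C$ with $L\in\{ACB,CAB\}$. All of that is correct and matches the paper step for step.

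The gap is that you stop exactly where the real work begins. For $L\in\{ACB,CAB\}$ you announce an intention (``bring in a third preference order and track the outcomes \dots\ aiming to show by a finite case analysis that one of three things must happen'') but do not carry it out, and the appeal to finiteness of the profile space only guarantees that a search terminates, not that it terminates in a safe manipulation --- that is precisely what has to be demonstrated. The paper's proof of these two cases is a concrete chain of forced values: for $L=ACB$ it successively pins down $F(CAB,ACB)=C$, $F(CAB,ABC)=C$, $F(CAB,BAC)=C$, $F(ACB,BAC)=A$, $F(ABC,BAC)=A$, extracting a safe manipulation or an escape at every branch that deviates, and closes with voter~2 safely manipulating $(ABC,BAC)$ to $(ABC,ABC)$; the case $L=CAB$ is a similar but different chain through $(ACB,ABC)$, $(CAB,ACB)$, $(ABC,ACB)$, $(ACB,ACB)$. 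Until you exhibit such chains (or an equivalent exhaustive argument), the lemma is not proved: your proposal correctly identifies the obstruction and the right toolkit (escapes, Propositions~\ref{antagonisticproposition}, \ref{inferiorproposition}, \ref{endup}), but the claim that the case analysis must close with these tools is asserted, not established.
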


\begin{proof} 
By Proposition~\ref{ifpartition} the second statement follows from the first.  We now prove the first statement.  By the Gibbard-Satterthwaite theorem, we can suppose (without loss of generality) voter $1$ can individually manipulate a profile $R$ with a vote of $L \ne R_1$.  If this manipulation is not safe then $R_1 = R_2$ and at $R$ both voters have an incentive to strategically vote $L$.  Assume (Proposition~\ref{onlyundershooting}) that the lack of safety arises from the prospect of an overshoot.  Without any loss of generality, let $R_1 = R_2 = ABC$.  Then
\[
B = F(ABC, ABC), \quad A = F(L, ABC), \quad \text{and} \quad C = F(L, L).
\]
By Proposition~\ref{bac_bca_cba} we only need to consider $L \in \{ACB,CAB\}$.
\bigskip

\noindent {\bf Case 1:} $L = ACB$.  It may be convenient to refer to the table below while reading the proof.  Rows represent voter~1's vote, columns voter~2's vote.  Cell entries indicate the value of $F$ at the relevant vote pairing.

\bigskip
\begin{center}
\begin{tabular}{cc|c|c|c|}
\arrayrulecolor{Gray}
& \multicolumn{1}{c}{} & \multicolumn{3}{c}{voter 2} \\
&  & $ABC$ & $ACB$ & $BAC$ \\
\cline{2-5}
\multirow{3}{*}{voter 1} & $ABC$ & $B$ & & \\
\cline{2-5}
& $ACB$ & $A$ & $C$ & \\
\cline{2-5}
& $CAB$ &  &  & \\
\cline{2-5}
\end{tabular}
\end{center}
\bigskip

\noindent  If $F(CAB, ACB) \ne C$ then voter~1 can safely manipulate from $(CAB, ACB)$ to $(ACB, ACB)$.  So suppose $F(CAB, ACB) = C$.  If $F(CAB, ABC) = B$ then voter~1 can escape from $(CAB, ABC)$ to $(ACB, ABC)$; if $F(CAB, ABC) = A$ then voter~2 can safely manipulate from $(CAB, ACB)$ to $(CAB, ABC)$.  So suppose $F(CAB, ABC) = C$.  Then $F(CAB, BAC) = C$, for if not, voter~2 can escape from $(CAB, ABC)$ to $(CAB, BAC)$.  Next consider $(ACB, BAC)$.  If $F(ACB, BAC) = B$ then voter~1 can escape from $(ACB, BAC)$ to $(CAB, BAC)$.  If $F(ACB, BAC) = C$ then voter~2 can escape from $(ACB, BAC)$ to $(ACB, ABC)$.  So let $F(ACB, BAC) = A$.  Then $F(ABC, BAC) = A$, otherwise voter~1 can safely manipulate from $(ABC, BAC)$ to $(ACB, BAC)$.  But now voter~2 can safely manipulate from $(ABC, BAC)$ to $(ABC, ABC)$.\par

\bigskip

\noindent {\bf Case 2:} $L = CAB$.

\bigskip
\begin{center}
\begin{tabular}{cc|c|c|}
\arrayrulecolor{Gray}
& \multicolumn{1}{c}{} & \multicolumn{2}{c}{voter 2} \\
&  & $ABC$ & $ACB$  \\
\cline{2-4}
\multirow{3}{*}{voter 1} & $ABC$ & $B$ & \\
\cline{2-4}
& $ACB$ & &  \\
\cline{2-4}
& $CAB$ & $A$  &  \\
\cline{2-4}
\end{tabular}
\end{center}
\bigskip

\noindent If $F(ACB, ABC) \ne A$ then voter~1 can safely manipulate $(ACB, ABC)$ by voting $CAB$.  So let $F(ACB, ABC) = A$.  If $F(CAB, ACB) \ne A$ then voter~2 can safely manipulate $(CAB, ACB)$ by voting $ABC$.  So assume $F(CAB, ACB) = A$.  If $F(ABC, ACB) \ne A$ then voter~1 can safely manipulate $(ABC, ACB)$ by voting $CAB$.  So let $F(ABC, ACB) = A$.  If $F(ACB, ACB) = A$ or $B$ then a safe manipulation is possible at $(ABC, ABC)$.  If $F(ACB, ACB) = C$ then voter~1 can safely manipulate from $(CAB, ACB)$ to $(ACB, ACB)$.  
\end{proof}


\begin{lemma}
\label{3Anotweaklyu}
A social choice rule $F$ that is  onto but not weakly unanimous is safely manipulable. 
\end{lemma}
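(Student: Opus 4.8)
The plan is to peel off easy cases by the tools already assembled until everything concentrates on one stubborn configuration, and then attack that with the escape technique. Before starting, two structural facts are worth recording. First, a dictatorial rule returns the common top alternative at every completely agreed profile, so a rule that is not weakly unanimous is automatically non-dictatorial; hence $F$ is onto \emph{and} non-dictatorial, and Lemma~\ref{tricky} is available. Second, since $F$ is not weakly unanimous there is a completely agreed profile at which everyone ranks some $X$ on top yet $F$ returns some $W\neq X$. If $W$ is last in the common order then $F$ is antagonistic; being onto it is non-constant, so Proposition~\ref{antagonisticproposition} finishes. Thus I may assume no completely agreed profile ever returns its bottom alternative, so $W$ is the middle one; relabelling, the completely agreed $ABC$-profile $R$ returns $B$.

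Next I would try to reach the voters' top by a single full switch. For any order $L$, with $V=[n]$, the profile $R_{-V}(L)$ is the completely agreed $L$-profile; if it returns $A$ then $F(R_{-V}(L))=A\succ_V B=F(R)$, so Proposition~\ref{endup} makes $F$ safely manipulable. Running this for $L=ACB,BAC,CAB$ (each value being top or middle, by the antagonism reduction) leaves only $f(ACB)=C$, $f(BAC)=B$, $f(CAB)=C$; together with $f(BCA),f(CBA)\in\{B,C\}$ this says $A\notin\mathcal{C}$, i.e.\ $A$ is never chosen at unanimity, even though by ontoness $A=F(P)$ for some $P$. The whole problem now sits in this hard core.

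Now I would mine $P$ for an escape. Morphing $P$ into the completely agreed $ABC$-profile one voter at a time, I catch the first step at which the output leaves $A$: profiles $P'$ (value $A$) and $P''$ (value $D\neq A$) differing in a single voter $j$ who switches from a type $\tau$ to $ABC$. Treating $P'$ as sincere, if $\tau$ ranks $A$ last then $j$'s switch improves on its worst outcome, so its whole type can escape and we are done; this removes $\tau\in\{BCA,CBA\}$. Treating $P''$ as sincere, if $D=C$ then the $ABC$-voters rank $C$ last and $j$ reversing to $\tau$ recovers $A$, again an escape. In the remaining case $D=B$ with $\tau\in\{ACB,BAC,CAB\}$, the reverse move is an $ABC$-type manipulation by $\tau$ that either is safe or produces an $ABC$-type overshoot by $\tau$; for $\tau=BAC$ this overshoot is dispatched by Proposition~\ref{bac_bca_cba}. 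Hence the only surviving possibility is an $ABC$-type overshoot by an order $L\in\{ACB,CAB\}$: sets $V_2\subsetneq V_1$ of $ABC$-voters with $F(Q_{-V_2}(L))=A\succ_V F(Q)=B\succ_V F(Q_{-V_1}(L))=C$.

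This last configuration is the genuine obstacle, and I expect it to be the whole difficulty; everything before it is routine use of Propositions~\ref{antagonisticproposition},~\ref{endup} and~\ref{bac_bca_cba}. The escape trick fails here precisely because the switched voters are of type $ACB$ or $CAB$, both of which rank $B$---not the harmful outcomes $A$ and $C$---last, so neither the $A$-profile nor the $C$-profile exposes a type ranking its own winner last. These are exactly the two situations settled, for two voters, in Cases~1 and~2 of Lemma~\ref{tricky}. My intended resolution is therefore to collapse the overshoot onto two blocks and invoke Lemma~\ref{tricky} through Proposition~\ref{ifpartition}: non-dictatoriality of every two-voter rule $F_{V_1,V_2}$ is free here, since $f(ABC)=B\neq A$ already prevents either block from dictating, so the crux is to choose the partition so that $Q,\,Q_{-V_2}(L),\,Q_{-V_1}(L)$ all become block profiles, making $F_{V_1,V_2}$ onto; failing a clean such reduction, one replays the finite escape-bookkeeping of Lemma~\ref{tricky}'s Cases~1--2 directly at $Q$.
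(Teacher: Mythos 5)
Your opening reductions are sound and track the paper's own first moves: non-antagonism forces the offending completely agreed profile to return its middle alternative (relabelled so that the all-$ABC$ profile returns $B$), and Proposition~\ref{endup} applied at that profile rules out $A\in\mathcal{C}$. The gap is in your final paragraph, which is where essentially all of the difficulty of the lemma lives. Your primary plan---choose a two-block partition so that $Q$, $Q_{-V_2}(L)$ and $Q_{-V_1}(L)$ all become block profiles---cannot work in general: $Q$ may contain up to six voter types among the non-$ABC$ voters, so no partition of $[n]$ into two blocks turns these three profiles into two-voter profiles. Your fallback, ``replay the finite escape-bookkeeping of Lemma~\ref{tricky}'s Cases 1--2 directly at $Q$,'' is not an argument: those cases are genuinely two-voter computations, and transplanting them to an $n$-voter profile with arbitrarily many types present is precisely the long chain of auxiliary profiles that constitutes the proof of Lemma~\ref{3Aweaklyu} (Cases 1 and 2 there). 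You cannot lean on that lemma here, since its proof opens by invoking the present lemma to assume weak unanimity, and you have not reproduced its bookkeeping under your weaker hypotheses.

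It is also worth noting that the detour through Propositions~\ref{onlyundershooting} and~\ref{bac_bca_cba} to an overshoot configuration, while correct, is unnecessary for this lemma. The paper goes straight from ontoness: take any profile $R^0=(X_1,\ldots,X_6)$ with $F(R^0)=A$ and merge whole types one at a time ($X_3\mapsto ABC$, then $X_4\mapsto ABC$, then $X_5\mapsto ACB$, then $X_6\mapsto ABC$); at each step either a type ranking the current winner last can escape, or Proposition~\ref{endup} applies, or the value $A$ is preserved. One ends with a profile supported on the two types $ABC$ and $ACB$ that still returns $A$; together with the completely agreed values $ABC\mapsto B$ and $ACB\mapsto C$ this makes $F_{X_1\cup X_3\cup X_4\cup X_6,\,X_2\cup X_5}$ onto and non-dictatorial, and Lemma~\ref{tricky} finishes via Proposition~\ref{ifpartition}. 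Some version of this type-merging chain is exactly what your proof is missing: your (correct) observation that non-dictatoriality of any two-block rule is free here reduces the task to producing a two-block profile that reaches $A$, but producing that profile is the real work, and it is not done.
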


\begin{proof}
Assume $F$ isn't antagonistic (Proposition~\ref{antagonisticproposition}).  In this case there will be a profile $S$ at which every voter reports $ABC$ (say) but $F(S)$ is $B$ rather than $A$.  If there is any completely agreed profile $Q$ such that $F(Q) = A$ then at $S$ all voters have an incentive to vote $Q_1$; if this strategic vote isn't safe then we can apply Proposition~\ref{endup}.  So assume $A \not\in \mathcal{C}$.  Given $F$ is onto we can find a profile 
\[
R^0 = (X_1, X_2, X_3, X_4, X_5, X_6)
\]
such that $F(R^0) = A$.  Let
\[
R^1 = R^0_{-X_3}(ABC) = (X_1\cup X_3, X_2, \emptyset, X_4, X_5, X_6).
\]
If $F(R^1) = B$ then at $R^0$ voters of type $BAC$ have an incentive to vote $ABC$, and this strategic vote is such that we may apply Proposition~\ref{endup} to conclude $F$ is safely manipulable.  If $F(R^1) = C$ then at $R^1$ some $ABC$ types can escape by voting $BAC$.  This leaves us with the case that $F(R^1) = A$, which we now assume.  Consider
\[
R^2 = R^1_{-X_4}(ABC) = (X_1\cup X_3 \cup X_4, X_2, \emptyset, \emptyset, X_5, X_6).
\]
If $F(R^2) \ne A$ then at $R^1$ voters of type $BCA$ can escape by voting $ABC$.  So assume $F(R^2) = A$.  Let
\[
R^3 = R^2_{-X_5}(ACB) = (X_1\cup X_3 \cup X_4, X_2\cup X_5, \emptyset, \emptyset, \emptyset, X_6).
\]
If $F(R^3) = B$ then at $R^3$ some voters of type $ACB$ can escape by voting $CAB$.  If $F(R^3) = C$ then at $R^2$ voters of type $CAB$ have an incentive to vote $ABC$, and this strategic vote is such that we may apply Proposition~\ref{endup} to conclude $F$ is safely manipulable.  So assume $F(R^3) = A$, and consider
\[
R^4 = R^3_{-X_6}(ABC) = (X_1\cup X_3 \cup X_4 \cup X_6, X_2\cup X_5, \emptyset, \emptyset, \emptyset, \emptyset)
\]
If $F(R^4) \ne A$ then at $R^3$ voters of type $CBA$ can escape by voting $ABC$.  So assume $F(R^2) = A$.  

We are left in a situation in which $F_{X_1\cup X_3 \cup X_4 \cup X_6, X_2\cup X_5}$ is well-defined; moreover, it maps $(ABC,ACB)$ to $A$, $(ABC,ABC)$ to $B$, and $(ACB,ACB)$ to $C$, and so Lemma~\ref{tricky} applies.

\end{proof}


\begin{lemma}
\label{3Aweaklyu}
If $F$ is an onto and nondictatorial social choice rule then it is safely manipulable.
\end{lemma}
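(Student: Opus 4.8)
The plan is to reduce the statement to the two-voter case already settled in Lemma~\ref{tricky}, using the partition device of Proposition~\ref{ifpartition}. First I would dispose of the non--weakly-unanimous case immediately by citing Lemma~\ref{3Anotweaklyu}: if $F$ is onto but not weakly unanimous it is already safely manipulable. So from here on I assume $F$ is weakly unanimous, i.e.\ at every completely agreed profile it returns the common top alternative.

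The key observation is that weak unanimity is inherited by every two-voter reduction. For any non-trivial partition $V_1,V_2$ of $[n]$, if the two super-voters both report the same order $L$ then the underlying $n$-voter profile is completely agreed, so $F_{V_1,V_2}(L,L)=\mathrm{top}(L)$. Hence $F_{V_1,V_2}$ is weakly unanimous, and in particular onto, for \emph{every} partition. By the second part of Lemma~\ref{tricky} it therefore suffices to exhibit a single non-trivial partition $V_1,V_2$ for which $F_{V_1,V_2}$ is non-dictatorial; Proposition~\ref{ifpartition}, via Lemma~\ref{tricky}, then delivers safe manipulability of $F$.

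So the heart of the argument is the claim that a weakly unanimous, onto, non-dictatorial $F$ admits some non-trivial partition whose two-voter reduction is non-dictatorial. I would prove the contrapositive by induction on $n$: if $F_{V_1,V_2}$ is dictatorial for every non-trivial partition, then $F$ itself is dictatorial. The base case $n=2$ is immediate, since the only partition is $\{1\},\{2\}$ and its reduction is $F$ itself. For the inductive step I would read off, from each reduction along $\{i\}$ versus $[n]\setminus\{i\}$ and along the coarser two-block partitions, a ``local decisiveness'' statement (which of the two blocks dictates), and then combine these statements in Arrow/Gibbard-Satterthwaite fashion to locate a single voter whose top alternative is always selected.

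The main obstacle is exactly this local-to-global propagation of dictatorship, and it is more delicate than it first appears. One cannot simply freeze a voter and induct on the remaining $n-1$: a rule can be non-dictatorial yet use one block's report to \emph{select} which member of the other block dictates, so that fixing a coordinate manufactures a spurious dictator and peeling off one fixed voter at a time is too weak. The way around this is to use all partitions simultaneously. For instance, if the reduction for $\{1\}$ versus the rest makes the singleton decisive whenever the others agree, then consistency with the reduction for the partition $\{1,2\}$ versus $\{3,\dots,n\}$ rules out the complementary block being the dictator there and forces $\{1,2\}$ to be decisive; iterating this comparison across the coordinates pins voter~$1$ down as a genuine global dictator, contradicting non-dictatoriality. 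Once the good partition has been produced, Lemma~\ref{tricky} closes the argument with no further work.
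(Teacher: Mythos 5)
There is a genuine gap, and it sits exactly where you suspected: the claim that an onto, weakly unanimous, non-dictatorial $F$ must admit \emph{some} non-trivial partition $V_1,V_2$ with $F_{V_1,V_2}$ non-dictatorial is simply false. Take $n=3$, $\mathcal{A}=\{A,B,C\}$, and define $F$ by: if voters $2$ and $3$ submit the same order, return voter $1$'s top alternative; otherwise return voter $2$'s top alternative. This $F$ is weakly unanimous, onto, and non-dictatorial (voter $1$ loses whenever $2$ and $3$ disagree and $2$'s top differs from $1$'s; voter $2$ loses whenever $2$ and $3$ agree on something other than $1$'s top). Yet all three two-voter reductions are dictatorial: $\{1\}$ dictates $F_{\{1\},\{2,3\}}$, $\{2\}$ dictates $F_{\{2\},\{1,3\}}$ (if the blocks disagree then voters $2$ and $3$ disagree, so $2$'s top wins), and $\{1,2\}$ dictates $F_{\{3\},\{1,2\}}$ (if voter $3$ deviates, again $2$'s top wins). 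The ``decisive blocks'' across the partitions are $\{1\}$, $\{2\}$, $\{1,2\}$, which are mutually consistent because each reduction only constrains $F$ on profiles where its two blocks are internally unanimous; so the ``combine all partitions simultaneously, Arrow-style'' step has nothing to bite on, and your contrapositive induction cannot be repaired. This is precisely the failure mode you flagged (one block's report selecting which member of the other block dictates), but invoking all partitions at once does not cure it.

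The paper does not attempt this purely structural reduction. Instead it starts from a concrete instance of unsafe manipulation: by Gibbard--Satterthwaite $F$ is manipulable somewhere, and Propositions~\ref{onlyundershooting} and~\ref{bac_bca_cba} reduce to an $ABC$-type overshoot via $ACB$ or $CAB$ at a specific profile $R$. From $R_{-V_1}(L)=(X_1,\dots,X_6)$ it builds an explicit chain of profiles $R^1,\dots,R^{10}$, and at each step either exhibits an escape or an application of Proposition~\ref{endup} directly (which is how rules like the counterexample above get caught), or else forces the value of $F$ on that profile; only at the end, with all values pinned down, does it conclude that the \emph{particular} partition $X_1\cup X_2\cup X_3$ versus $X_4\cup X_5\cup X_6$ (resp.\ $X_1\cup X_2\cup X_5$ versus $X_3\cup X_4\cup X_6$) yields an onto non-dictatorial two-voter rule, so that Lemma~\ref{tricky} and Proposition~\ref{ifpartition} apply. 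The partition is extracted from the geometry of the overshoot, not from an abstract existence argument, and the interleaved ``either safely manipulable or the value is forced'' dichotomies are what cover the cases your approach misses. Your opening moves (Lemma~\ref{3Anotweaklyu}, the observation that weak unanimity passes to every reduction and gives ontoness) are correct and match the paper; the missing content is the entire middle of the argument.
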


\begin{proof}
Throughout this proof we  may assume that three or more voters are present. Lemma~\ref{3Anotweaklyu} allows us to assume that $F$ is weakly unanimous.

Due to the Gibbard-Satterthwaite theorem it suffices to show that $F$ being unsafely manipulable implies $F$ is also safely manipulable.  By Proposition~\ref{onlyundershooting}, and without any loss of generality, suppose that some $ABC$ types may strategically overshoot at a profile $R$.  By Proposition~\ref{bac_bca_cba} we may assume that they may overshoot voting $ACB$ or $CAB$.\par\medskip

\noindent {\bf Case 1:} Suppose that, at $R$, some $ABC$ types can strategically overshoot by voting $ACB$.  Let $V$ be the set of $ABC$ types at $R$.  Then $F(R)=B$ and there must exist some $V_1 \subset V$ such that $F(R_{-V_1}(ACB)) = A$ while $F(R_{-V}(ACB)) = C$.  Let
\[
R^1 = R_{-V_1}(ACB) = (X_1, X_2, X_3, X_4, X_5, X_6).
\]
\noindent The intent now is to either directly show that $F$ is safely manipulable or to demonstrate that the two-voter social choice rule generated by $X_1\cup X_2\cup X_3$ and $X_4\cup X_5\cup X_6$ is well-defined, onto, and nondictatorial.  This will imply safe manipulability by Proposition~\ref{ifpartition}.
We know that $F(R^1) = A$, and that both $X_1$ and $X_2$ are not empty (and therefore that $X_1 \cup X_2 \cup X_3 \ne \emptyset$).  Let
\[
R^2 = R_{-V}(ACB) = (\emptyset, X_1\cup X_2, X_3, X_4, X_5, X_6).
\]
We know  $F(R^2) = C$.  Next let 
\[
R^3 = (X_1\cup X_2, \emptyset, X_3, X_4, X_5, X_6).
\]
 If $F(R^3) = A$ or $C$ then consider the manipulation of $R^2$ by some ACB types voting ABC; if this is unsafe we may then use Proposition~\ref{endup} to deduce $F$ is safely manipulable.
So suppose $F(R^3) = B$.  Let
\begin{align*}
R^4 &= (X_1, X_2, X_3, \emptyset, X_4\cup X_5, X_6),
\end{align*}
If $F(R^4) \ne A$ and $X_4\ne\emptyset$, then $BCA$ types can escape (to $R^4$) at $R^1$.  So suppose $F(R^4) = A$ (if $X_4 = \emptyset$ this is immediate as  then  $R^4=R^1$). Let us consider now
\begin{align*}
R^5 &= (X_1, X_2, X_3, \emptyset, X_4\cup X_5\cup X_6, \emptyset ).
\end{align*}
 If $F(R^5) \ne A$ and $X_6\ne\emptyset$, then $CBA$ types can escape (to $R^5$) at $R^4$.  So suppose $F(R^5) = A$  (if $X_6 = \emptyset$ this is immediate as then  $R^5=R^1$).  
 
Let
\begin{align*}
R^6 &= (X_1\cup X_3, X_2, \emptyset, \emptyset, X_4\cup X_5\cup X_6, \emptyset ).
\end{align*} 
Suppose, for now, $X_3 \ne \emptyset$.  If $F(R^6) = B$ then consider the manipulation of $R^5$ by $BAC$ types voting $ABC$; if this is unsafe we may then use Proposition~\ref{endup} to deduce $F$ is safely manipulable.  If $F(R^6) = C$ then some $ABC$ types can escape from $R^6$ (to $R^5$).  So suppose $F(R^6) = A$  (if $X_3 = \emptyset$ this is immediate).  

Consider now 
\begin{align*}
R^7 &= (X_1\cup X_2\cup X_3, \emptyset, \emptyset, \emptyset, X_4\cup X_5\cup X_6, \emptyset ).
\end{align*}
If $X_2 = \emptyset$, then $F(R^7) = F(R^6)= A \ne C$.  If $X_2 \ne \emptyset$ and $F(R^7) = C$, some $ABC$ types can escape $R^7$ (to $R^6$).  So suppose $F(R^7) \ne C$.  This implies that in the event $X_4 \cup X_5 \cup X_6 \ne \emptyset$, the second voter is not a dictator for $F_{X_1 \cup X_2 \cup X_3 , X_4 \cup X_5 \cup X_6}$.

We now show (assuming $F$ is not safely manipulable) that $X_4 \cup X_5 \cup X_6 \ne \emptyset$.  We will then show (again assuming $F$ is not safely manipulable) the first voter is not a dictator for $F_{X_1\cup X_2\cup X_3, X_4\cup X_5\cup X_6}$.  Given that $F$ is weakly unanimous, hence onto, we will then have enough to use Proposition~\ref{ifpartition}.  We need just three more profiles:
\begin{align*}
R^8 &= (\emptyset, X_1\cup X_2\cup X_3, \emptyset, X_4, X_5, X_6),\\
R^9 &= (\emptyset, X_1\cup X_2\cup X_3, \emptyset, \emptyset, X_5, X_4\cup X_6),\\
R^{10} &= (\emptyset, X_1\cup X_2\cup X_3, \emptyset, \emptyset, \emptyset, X_4\cup X_5\cup X_6).
\end{align*}
\noindent
If $X_3 = \emptyset$, then $F(R^8) =F(R^2)= C$.  Suppose $X_3 \ne \emptyset$.  If $F(R^8) \ne C$, then $BAC$ types can escape from $R^2$ (to $R^8$).  So assume $F(R^8) = C$.  Since $F(R^7) \ne C$, by the weak unanimity of $F$, this implies $X_4\cup X_5\cup X_6 \ne \emptyset$.  

If $X_4 = \emptyset$ then $F(R^9) =F(R^8) = C$.  Suppose $X_4 \ne \emptyset$.  If $F(R^9) = A$, then some $CBA$ types can escape from $R^9$ (to $R^8$).  So (regardless of whether $X_4$ is empty or not) let $F(R^9) \ne A$.  If $X_5 = \emptyset$ then $F(R^{10}) \ne A$.  Suppose $X_5 \ne \emptyset$.  If $F(R^{10}) = A$, then some $CBA$ types can escape from $R^{10}$ (to $R^9$).  So let $F(R^{10}) \ne A$.  This implies that the first voter is not a dictator for $F_{X_1\cup X_2\cup X_3, X_4\cup X_5\cup X_6}$. This function inherits weak unanimity from $F$ and hence is onto.  Then by Proposition~\ref{ifpartition}, $F$ is safely manipulable.\par\medskip


\noindent {\bf Case 2:} overshooting by voting CAB.  Suppose that, at $R$, some $ABC$ types can strategically overshoot by voting $CAB$.  This implies $F(R) = B$.  Let $V$ be the set of $ABC$ types at $R$.  There must exist some $V_1 \subset V$ such that $F(R$$_{-V_1}(CAB)) = A$.  Let
\[
R^1 = R_{-V_1}(CAB) = (X_1, X_2, X_3, X_4, X_5, X_6).
\]
\noindent The intent now is to either directly show that $F$ is safely manipulable or to demonstrate that the two-voter social choice rule generated by $X_1\cup X_2\cup X_5$ and $X_3\cup X_4\cup X_6$ is onto, and nondictatorial.  We know $F(R^1) = A$, and $X_1, X_5 \ne \emptyset$ (and hence $X_1 \cup X_2 \cup X_5 \ne \emptyset$).  Let
\[
R^2 = R_{-V}(CAB) = (\emptyset, X_2, X_3, X_4, X_1\cup X_5, X_6).
\]
\noindent If $F(R^2) = A$ or $B$ then we may apply Proposition~\ref{endup} to deduce $F$ is safely manipulable.  So suppose $F(R^2) = C$.  Next let 
\[
R^3 = (X_1\cup X_5, X_2, X_3, X_4, \emptyset, X_6).
\]
\noindent If there are no $CAB$ types present at $R$ then $R = R^3$, and $F(R^3) = B$.  Now suppose that there are $CAB$ types present at $R$, and $F(R^3) \ne B$; given $F(R) = B$, $CAB$ types are then capable of escaping from $R$ to $R^3$.  So let us suppose $F(R^3) = B$.  Let
\begin{align*}
R^4 &= (X_1\cup X_2\cup X_5, \emptyset, X_3, X_4, \emptyset, X_6),\\
R^5 &= (X_1\cup X_2\cup X_5, \emptyset, \emptyset, X_4, \emptyset, X_3\cup X_6),\\
R^6 &= (X_1\cup X_2\cup X_5, \emptyset, \emptyset, \emptyset, \emptyset, X_3\cup X_4\cup X_6).
\end{align*}
\noindent
If $X_2= \emptyset$, then $F(R^4)=F(R^3) = B$.  Suppose $X_2 \ne \emptyset$.  If $F(R^4) \ne B$ then $ACB$ types can escape from $R^3$ to $R^4$.  So let $F(R^4) = B$. We note that by weak unanimity this implies $X_3\cup X_4\cup X_6 \ne \emptyset$.  If $X_3 = \emptyset$, then $F(R^5)=F(R^4) = B$.  Suppose $X_3 \ne \emptyset$; if $F(R^5) = A$ then some $CBA$ types can escape from $R^5$ to $R^4$.  So, regardless of whether $X_3$ is empty or not, to proceed we let $F(R^5) \ne A$.  If $X_4 = \emptyset$, $F(R^6) \ne A$.  If $X_4 \ne \emptyset$ then in the event $F(R^6) = A$, some $CBA$ types can escape from $R^6$ to $R^5$.  So  $F(R^6) \ne A$.  This implies the first voter is not a dictator for $F_{X_1\cup X_2\cup X_5, X_3\cup X_4\cup X_6}$.

It remains to show that the second voter is not a dictator for $F_{X_1\cup X_2\cup X_5, X_3\cup X_4\cup X_6}$.  For this purpose we will need four more profiles:
\begin{align*}
R^7 &= (X_1, X_2, X_3\cup X_4, \emptyset, X_5, X_6),\\
R^8 &= (X_1, X_2, X_3\cup X_4\cup X_6, \emptyset, X_5, \emptyset),\\
R^ 9 &= (\emptyset, X_2, X_3\cup X_4\cup X_6, \emptyset, X_1\cup X_5, \emptyset),\\
R^{10} &= (\emptyset, \emptyset, X_3\cup X_4\cup X_6, \emptyset, X_1\cup X_2\cup X_5, \emptyset).
\end{align*}
\noindent If $X_4 = \emptyset$, then $F(R^7) =F(R^1)= A$.  Suppose $X_4 \ne \emptyset$.  If $F(R^7) \ne A$ then $BCA$ types can escape from $R^1$ to $R^7$.  So let $F(R^7) = A$.  If $X_6 = \emptyset$, then $F(R^8) =F(R^7)= A$.  Suppose $X_6 \ne \emptyset$.  If $F(R^8) \ne A$ then $CBA$ types can escape from $R^7$ to $R^8$.  So let $F(R^8) = A$.  If $F(R^9) = B$ then some $CAB$ types can escape from $R^9$ to $R^8$.  So let $F(R^9) \ne B$.  If $X_2 = \emptyset$, then $F(R^{10})=F(R^9) \ne B$.  Suppose $X_2 \ne \emptyset$.  If $F(R^{10}) = B$ then some $CAB$ types can escape from $R^{10}$ to $R^9$.  So let $F(R^{10}) \ne B$.  But then the second voter is not a dictator for $F_{X_1\cup X_2\cup X_5, X_3\cup X_4\cup X_6}$.

The rule $F_{X_1\cup X_2\cup X_5, X_3\cup X_4\cup X_6}$ inherits weak unanimity from $F$.  Hence this rule is onto.  Then by Proposition~\ref{ifpartition}, $F$ is safely manipulable.
\end{proof}


\section{Proof  of Theorem~\ref{ASSWtheorem}: three or more alternatives}
\label{sec:fouralternatives}


Proposition~\ref{theorem2fromOS2} shows Theorem~\ref{ASSWtheorem} holds when $|\mathcal{C}| < |\mathcal{A}| - 1$.  The two subsequent short propositions set up a proof-by-induction of the whole thing.


\begin{proposition}
\label{theorem2fromOS2}
Let $F$ be an onto social choice rule.  If at least two alternatives are missing from $\mathcal{C}$ then $F$ is safely manipulable.
\end{proposition}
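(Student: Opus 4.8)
The plan is to build an \emph{escape}: a profile $R$ whose winner $F(R)=X$ is one of the two missing alternatives, together with the full set $V$ of voters who rank $X$ last, such that some subset of $V$ can shift the winner off $X$. Such a configuration settles the matter immediately, because $X$ is the \emph{worst} outcome for every voter in $V$; hence any votes they cast leave the winner weakly above $X$, so the witnessing strategic vote is automatically safe. Concretely, this reduces to Proposition~\ref{endup}: if an $X$-bottom voter $i$ has an incentive to deviate at a profile with $F(R)=X$, then, writing $V$ for $i$'s type, we trivially have $F(R_{-V}(L))\succeq_V X=F(R)$, and safe manipulability follows. Let $A,B$ be two alternatives absent from $\mathcal{C}$. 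I would first dispose of the antagonistic case: $F$ is non-constant (it is onto with $|\mathcal{A}|\ge 3$), so if it is antagonistic it is safely manipulable by Proposition~\ref{antagonisticproposition}. Thus I may assume $F$ is \emph{not} antagonistic, which gives the crucial fact that $F$ never returns $X$ at a profile in which every voter ranks $X$ last.

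Next I would manufacture a pivot. By ontoness pick $P$ with $F(P)=A$, fix an order $L$ having $A$ last, and convert the voters of $P$ to $L$ one at a time, converting the voters who \emph{already} rank $A$ last first. Since $F(P)=A$ while the terminal completely agreed profile has everyone ranking $A$ last (so $F\ne A$ there, by non-antagonism), the winner must leave $A$ at some step: there is a last profile $R$ with $F(R)=A$ whose successor satisfies $F(R_{-\{p\}}(L))=D\ne A$. If the pivot $p$ \emph{already ranks $A$ last} at $R$, then with $V$ the set of all voters sharing $p$'s order, $p$'s deviation to $L$ exhibits an incentive for an $A$-bottom group while $F(R)=A$, and the previous paragraph finishes the proof. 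So the whole difficulty is concentrated in the case where the breaking deviation is cast by a voter for whom $A$ is \emph{not} worst.

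To attack that case I would use the reformulation that the full $A$-bottom group $V'$ at a profile with winner $A$ either can escape (and we are done) or else \emph{the winner is independent of how $V'$ votes} (since, $A$ being $V'$'s worst, ``no strict improvement'' means $F$ stays $A$ for every reassignment of $V'$). This lets me free up a large $A$-bottom group without changing the winner, and it is here that the \emph{second} missing alternative $B$ becomes indispensable: having reached a profile with winner $A$ in which many voters can be reassigned freely, I would aim to drive the broken outcome to the value $B$ and re-anchor an escape at the voters ranking $B$ last, exploiting that $B\notin\mathcal{C}$ gives exactly the same conversion freedom relative to $B$ that we used for $A$. Where a direct escape is not immediate, Proposition~\ref{onlyundershooting} reduces the surviving possibilities to a strategic overshoot, whose ``too few'' outcome is strictly below the status quo for the relevant type; that outcome is again a candidate missing-alternative winner from which its bottom-ranked group can deviate, yielding an escape, or else an application of Proposition~\ref{inferiorproposition}.

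The main obstacle is precisely this substantive case: forcing the pivotal deviation to fit the escape template, i.e.\ to land at a profile whose winner is one of $A,B$ and is ranked last by a nonempty group that can move it. The leverage comes from combining three freedoms---the choice of target order $L$, the order in which voters are converted, and the availability of \emph{two} missing alternatives so that whenever a deviation pushes the winner to a new value there is still a missing alternative available to anchor an escape. Verifying that these choices can always be arranged simultaneously, and that the overshoot branch indeed collapses to an escape or to Proposition~\ref{inferiorproposition}, is the delicate bookkeeping at the heart of the argument.
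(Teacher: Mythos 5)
Your opening reductions are sound and match the paper's toolkit: an escape from a missing alternative is automatically safe, the antagonistic case is disposed of by Proposition~\ref{antagonisticproposition}, and a deviation by a full $A$-bottom type at a profile with winner $A$ feeds directly into Proposition~\ref{endup}. But the proof has a genuine gap exactly where you place ``the delicate bookkeeping at the heart of the argument'': the case in which the winner leaves $A$ only when a voter who does \emph{not} rank $A$ last is converted. Your plan for that case --- ``drive the broken outcome to the value $B$ and re-anchor an escape at the voters ranking $B$ last'' --- is not an argument: nothing in the pivot construction controls which alternative the outcome jumps to after the break (it can be any $D\neq A$, missing or not), and no mechanism is offered for steering it to $B$. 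The appeal to Proposition~\ref{onlyundershooting} does not rescue this: the ``too few'' outcome of an overshoot is strictly worse than the status quo for the deviating type, but there is no reason it should be a missing alternative or be bottom-ranked by anyone, so it does not anchor an escape. Also, your intermediate claim that failure to escape makes ``the winner independent of how $V'$ votes'' is only justified for subsets of $V'$ switching uniformly to a single order; arbitrary reassignments change the type structure and require a separate induction you have not supplied.

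For contrast, the paper avoids the pivot-path entirely. With $A,B\notin\mathcal{C}$ it fixes two orders $L^{AB}$ and $L^{BA}$ that place $A$ and $B$ in the top two positions, lets $m\ge 2$ be the minimum number of voter types present at any profile mapped into $\{A,B\}$, and splits on whether the minimizing profiles contain $L^{AB}$ and/or $L^{BA}$ types. In each case, converting one whole type either drops the type count below $m$ or creates a forbidden type, forcing the outcome out of $\{A,B\}$; since $L^{AB}$ and $L^{BA}$ voters prefer \emph{anything} in $\{A,B\}$ to anything outside it, the full reversion is an improvement and Proposition~\ref{endup} applies. This is where the hypothesis of \emph{two} missing alternatives does real work --- both sit in the top two slots of the manipulating orders --- whereas in your sketch the second missing alternative is invoked only as a hoped-for landing site. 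To repair your proof you would need to supply the missing argument for the non-$A$-bottom pivot case, and it is not clear the path construction gives you enough leverage to do so.
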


\begin{proof}
Suppose that neither $A$ nor $B$ belong to $\mathcal{C}$.  Let $L^{AB}$ be a fixed but otherwise arbitrary linear order of the alternatives that has $A$ first and $B$ second.  Let $L^{BA}$ be the linear order formed by taking $L^{AB}$ and reversing the spots of $A$ and $B$, ceteris paribus.  Let $m \ge 2$ be the minimum possible number of voter types present when the value of $F$ is in the set $\{ A,B \}$.  Let $\cal S$ denote the set of profiles that have exactly $m$ voter types present and are mapped by $F$ to either $A$ or $B$.\par

Firstly, suppose no profile in $\cal S$ has an $L^{AB}$ type voter present.  Pick $R \in \cal S$.  Let $V$ be the entire set of voters having the preference order $R_1$ at $R$.  Consider the profile $R_{-V}(L^{AB})$; $F$ cannot map this profile to $A$ or $B$ because it has $m$ voter types present, and one of those types is $L^{AB}$.  However,
\begin{equation*}
F(R) = F((R_{-V}(L^{AB}))_{-V}(R_1)) \in \{A,B\} 
\end{equation*}
and so, at $R_{-V}(L^{AB})$, a voter of type $L^{AB}$ has an incentive to strategically vote $R_1$.  If such a strategic vote would be unsafe then we may apply Proposition~\ref{endup} to deduce $F$ is safely manipulable.  In the event that no $R \in \cal S$ has $L^{BA}$ type voters, the analysis proceeds similarly.\par

Secondly, suppose that some profile in $\cal S$ has an $L^{AB}$ type voter present, another has an $L^{BA}$ type voter present, but no profile in $\cal S$ has voters of both types present.  Let $R \in \cal S$ have $L^{AB}$ types present.  Since $R$ cannot be completely agreed there must be an $i$ such that $R_i \ne L^{AB}$.  Let $V = \{ j | R_j = R_i \}$.  Since $F(R_{-V}(L^{BA}))\notin \{A,B\}$ (the profile $R_{-V}(L^{BA})$ features both $L^{AB}$ and $L^{BA}$ types), the profile $R_{-V}(L^{BA})$ is manipulable by $L^{BA}$ voters voting $R_1$.  If this manipulation is unsafe, we can apply Proposition~\ref{endup}. 

Thirdly and finally, suppose $R \in \cal S$ has both $L^{AB}$ and $L^{BA}$ types present.  Let $U$ and $W$ be, respectively, the set of all voters with preferences $L^{AB}$ and $L^{BA}$ at $R$.  Then $F(R_{-U}(L^{BA}))\notin \{A,B\}$ and $F(R_{-W}(L^{AB}))\notin \{A,B\}$ as both the relevant profiles have only $m - 1$ types present.  Define a new relation on $\cal A$ as follows: $\succeq_{U\& W}$ if and only if $\succeq_U$ and $\succeq_W$.  Without loss of generality we assume 
\begin{equation*}
F(R_{-U}(L^{BA})) \succeq_{U\& W} F(R_{-W}(L^{AB})).
\end{equation*}
Now
\begin{equation*}
F(R) = F((R_{-W}(L^{AB}))_{-W}(L^{BA})) \succ_{U\& W} F(R_{-W}(L^{AB}))
\end{equation*}
because $F(R)\in \{A,B\}$ and $F(R_{-W}(L^{AB}))\notin \{A,B\}$.  So voters in $W \subset U\cup W$ can manipulate at $R_{-W}(L^{AB})$ (by insincerely voting $L^{BA}$ rather than sincerely voting $L^{AB}$).  If this manipulation is safe we are done; if not then notice that the manipulation is such that we may apply Proposition~\ref{endup}.
\end{proof}


\begin{proposition}
\label{short3} 
Let $F$ be a social choice rule.  If $Y \in \mathcal{C}$ but $Y$ is not in the range of $F_{-Z}$ for some $Z \ne Y$ then $F$ is safely manipulable.
\end{proposition}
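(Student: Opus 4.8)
The plan is to pit the completely agreed profile witnessing $Y \in \mathcal{C}$ against a twin profile in which the troublesome alternative $Z$ has been pushed to the bottom of everyone's ballot, and then read off a one-sided manipulation that Proposition~\ref{endup} can finish. Since $Y \in \mathcal{C}$, I would first fix a completely agreed profile $R$ in which every voter reports some order $P$ and $F(R) = Y$. The order $P$ cannot have $Z$ in last place: otherwise $R$ itself would be a profile in which every voter ranks $Z$ last yet $F$ returns $Y$, contradicting the hypothesis that $Y$ is not in the range of $F_{-Z}$. Let $P'$ be the order obtained from $P$ by moving $Z$ to the bottom (so $P' \ne P$), and let $Q$ be the completely agreed profile in which every voter reports $P'$. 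Because $Z$ is ranked last by every voter at $Q$, the hypothesis forces $F(Q) \ne Y$; write $W = F(Q)$.

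I would next dispose of the antagonistic possibility. If $W = Z$, then $Q$ is a profile at which every voter ranks $Z$ last while $F(Q) = Z$, so $F$ is antagonistic; since $F(R) = Y \ne Z$ shows $F$ is non-constant, Proposition~\ref{antagonisticproposition} yields safe manipulability at once. This case must be separated out precisely because it is the one place where the relative-order observation used below breaks down.

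The main case is $W \notin \{Y,Z\}$. Here the key remark is that $P$ and $P'$ differ only in the position of $Z$, so the relative ranking of any two alternatives distinct from $Z$ agrees in $P$ and $P'$; in particular, since $Y,W \ne Z$, the alternative $W$ lies above $Y$ in $P$ if and only if it lies above $Y$ in $P'$. I would then split accordingly. If $W$ lies above $Y$ in $P$, I treat $R$ as sincere: every voter is of type $P$, and switching all of them to $P'$ gives $R_{-[n]}(P') = Q$ with $F(Q) = W \succ_{[n]} Y = F(R)$, so each voter has an incentive to vote $P'$ and $F(R_{-[n]}(P')) \succeq_{[n]} F(R)$; Proposition~\ref{endup} applies. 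If instead $Y$ lies above $W$ in $P$ (hence also in $P'$), I treat $Q$ as sincere: every voter is of type $P'$, and switching all of them to $P$ gives $Q_{-[n]}(P) = R$ with $F(R) = Y \succ_{[n]} W = F(Q)$, so again there is an incentive and $F(Q_{-[n]}(P)) \succeq_{[n]} F(Q)$, and Proposition~\ref{endup} finishes.

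I do not expect a genuine obstacle here: the argument is short once the right two profiles are in hand. The only subtlety worth flagging is making sure the relative order of $Y$ and $W$ is genuinely preserved under the passage from $P$ to $P'$, which is exactly why the value $W = Z$ has to be handled apart (via antagonism) rather than folded into the generic case; after that split, each surviving subcase collapses in a single step to Proposition~\ref{endup}.
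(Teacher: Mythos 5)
Your proof is correct: the case split is exhaustive ($W=Z$ versus $W\notin\{Y,Z\}$, and within the latter the two possible relative orders of $W$ and $Y$, which are indeed preserved when $Z$ is demoted), and each branch legitimately terminates in Proposition~\ref{antagonisticproposition} or Proposition~\ref{endup}. The route differs from the paper's in where it anchors the argument, and the difference is instructive. You start from the completely agreed profile witnessing $Y\in\mathcal{C}$ and push $Z$ to the bottom to manufacture a second profile outside the range condition; because you have no control over where $Y$ sits in that order, you must then argue about the direction of the preference between $Y$ and the new outcome $W$, and separately dispose of $W=Z$ via antagonism. The paper instead chooses the ``$Z$-last'' profile first and with a free hand: it takes $R$ to be the completely agreed profile in which every voter ranks $Y$ \emph{first} and $Z$ last, so $F(R)\ne Y$ by hypothesis, and lets $Q$ be any completely agreed profile with $F(Q)=Y$. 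Since $Y$ is everyone's favourite at $R$, the unanimous switch from $R_1$ to $Q_1$ is automatically an improvement, and Proposition~\ref{endup} applies with no case analysis whatsoever. Both proofs use the same engine (a unanimous switch between two completely agreed profiles plus Proposition~\ref{endup}); yours is sound but pays a three-way split for not placing $Y$ on top at the outset.
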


\begin{proof}
Let $R$ be a unanimous profile at which all voters rank $Y$ first and $Z$ last, but $F(R) \ne Y$.  Given $Y$ is in $\mathcal{C}$, there is a completely agreed profile $Q$ mapped by $F$ to $Y$.  At $R$ every voter $i$ has an incentive to switch from $R_i$ to $Q_1$.  If this strategic vote is unsafe, we can apply Proposition~\ref{endup}.
\end{proof}


Many parts of the remainder of our proof will utilise `subrules' derived from the main rule $F$.  Given any alternative $X$, the subrule $F_{-X}$ is designed to pick alternatives from ${\cal A}-\{X\}$, and operates as follows.  Let $R$ be an arbitrary profile of preferences over the set ${\cal A} - \{X\}$.  Let $R'$ be the profile of preferences over the original set of alternatives $\cal A$ formed by appending an $X$ to the bottom of every preference order in $R$.  Then the value of $F_{-X}$ at $R$ shall be the value of $F$ at $R'$.  Provided $F$ is not antagonistic, $F_{-X}$ will not select $X$ at $R'$ and will therefore be sensibly defined.   We will say $F_{-X}$ is {\em proper} if it is non-dictatorial and has image ${\cal A} - \{X\}$.


\begin{proposition}
\label{ppr}
Let $F$ be a social choice rule.  If there is an $A \in \mathcal{A}$ for which $F_{-A}$ is proper and safely manipulable then $F$ itself is safely manipulable.
\end{proposition}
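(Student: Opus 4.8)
The plan is to transfer a safe strategic manipulation of $F_{-A}$ upward to $F$ by appending $A$ to the bottom of every order in sight. Since $F_{-A}$ is safely manipulable, I would first extract a witness: a profile $R$ of preferences over $\mathcal{A} - \{A\}$, a voter $i$, and a linear order $L \ne R_i$ over $\mathcal{A} - \{A\}$, such that under $F_{-A}$ and at $R$ the voter $i$ has an incentive to vote $L$ and that strategic vote is safe. Writing $R'$ and $L'$ for the profile and order over $\mathcal{A}$ obtained from $R$ and $L$ by placing $A$ last in every order, I would aim to show that $L'$ is a safe strategic vote for $i$ at $R'$ under $F$.

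The computational heart is a single identity. Let $V$ be the set of voters of $i$'s type; appending $A$ to the bottom is injective on the orders occurring in $R$ (none of which involves $A$), so $V$ is equally the set of voters of type $R_i'$ at $R'$, and $L \ne R_i$ forces $L' \ne R_i'$. For any $W \subseteq V$ the profile $(R')_{-W}(L')$ is again an ``$A$-at-the-bottom'' profile, namely $(R_{-W}(L))'$, whence by definition of the subrule $F((R')_{-W}(L')) = F_{-A}(R_{-W}(L))$. Because $F_{-A}$ is proper its value always lies in $\mathcal{A} - \{A\}$, so every outcome appearing in the argument avoids $A$; and since the voters of $V$ rank $A$ last in $R'$, the restriction of $\succ_V$ to $\mathcal{A} - \{A\}$ coincides with the $F_{-A}$-preference of $V$. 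Thus strict preferences between these outcomes read the same whether interpreted under $F$ or under $F_{-A}$.

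With this identity the incentive transfers immediately: the subset $V_1 \ni i$ witnessing $i$'s incentive under $F_{-A}$ gives $F((R')_{-V_1}(L')) \succ_V F(R')$, so $i$ has an incentive to vote $L'$ at $R'$ under $F$. For safety I would argue by contradiction. If $L'$ were unsafe there would be a subset $V_2 \subseteq V$ with $i \in V_2$, every voter of $V_2$ having an incentive (under $F$, at $R'$) to vote $L'$, yet $F(R') \succ_V F((R')_{-V_2}(L'))$. Applying the identity to each voter's witnessing subset shows every voter of $V_2$ also has an incentive under $F_{-A}$ to vote $L$ at $R$; applying it once more converts $F(R') \succ_V F((R')_{-V_2}(L'))$ into $F_{-A}(R) \succ_V F_{-A}(R_{-V_2}(L))$. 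This exhibits $V_2$ as a witness that $L$ is \emph{unsafe} for $i$ at $R$ under $F_{-A}$, contradicting the choice of $R, i, L$. Hence $L'$ is safe and $F$ is safely manipulable.

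The only real care is bookkeeping: one must verify that the type-class $V$ is the same above and below, that ``having an incentive'' passes in the required direction voter-by-voter, and that every outcome in play stays inside $\mathcal{A} - \{A\}$ so that the preference restriction is faithful---this last point is exactly what properness of $F_{-A}$ buys us. No appeal to the Gibbard--Satterthwaite theorem or to the overshooting machinery is needed; the statement is purely a lifting lemma, and I expect the main (modest) obstacle to be stating the incentive correspondence cleanly enough that both the forward transfer of $i$'s incentive and the backward transfer of the $V_2$-witnesses follow from the one identity.
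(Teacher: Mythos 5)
Your proposal is correct and follows essentially the same route as the paper: append $A$ to the bottom of every order, use the identity $F(\bar{R}_{-W}(\bar{L})) = F_{-A}(R_{-W}(L))$ for all $W\subseteq V$, and observe that since every outcome lies in $\mathcal{A}-\{A\}$ and the voters of $V$ rank $A$ last, all preference comparisons transfer verbatim. Your contradiction argument for safety is, if anything, slightly more careful than the paper's direct transfer, since you explicitly track the ``all voters in $V_2$ have an incentive'' clause in the definition of an unsafe vote.
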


\begin{proof}
Suppose that when the proper rule $F_{-A}$ is in operation, voter $i$ has an incentive to safely strategically vote $L$ at the profile $R$ on ${\cal A}-\{A\}$.  Let $V$ be the set of indices $j$ such that $R_j=R_i$.  Then
\[
F_{-A}(R_{-U}(L))\succ_VF_{-A}(R)
\]
for one particular subset $U\subseteq V$ and $F_{-A}(R_{-W}(L))\succeq_VF_{-A}(R)$  for every subset ${W\subseteq V}$.  Let $\bar{R}=(\bar{R}_1,\ldots,\bar{R}_n)$ be the profile of preferences over $\cal A$ formed by appending an $A$ to the bottom of every preference order $R_i$ in $R$.  Let $\bar{L}$ be $L$ with $A$ appended to the bottom.  Then $V=\{j\mid \bar{R}_j=\bar{R}_i\}$.  Furthermore, 
\[
F(\bar{R}_{-U}(\bar{L})) =   F_{-A}(R_{-U}(L))\succ_VF_{-A}(R) =   F(\bar{R})
\]
and for any $W\subseteq V$
\[
F(\bar{R}_{-W}(\bar{L}))=F_{-A}(R_{-W}(L))\succeq_VF_{-A}(R)=F(\bar{R}).
\]
Hence voter~$i$ can safely manipulate at $\bar{R}$ with a vote of $\bar{L}$.
\end{proof}


We can now prove Theorem~\ref{ASSWtheorem}:

\begin{proof}
Lemma~\ref{3Aweaklyu} provides a base case for an inductive proof.  Assume $|\mathcal{A}| \ge 4$ and the statement holds when the number of alternatives is $|\mathcal{A}|-1$.  Assume $F$ is not anatagonistic (Proposition~\ref{antagonisticproposition}).  Assume $|\mathcal{C}| \ge |\mathcal{A}|-1$ (Proposition~\ref{theorem2fromOS2}).  Find an alternative $A$ as follows: if $|\mathcal{C}| = |\mathcal{A}| - 1$, let $A$ be the single alternative in $\mathcal{A} - \mathcal{C}$; if $|\mathcal{C}| = |\mathcal{A}|$ let $A$ be an arbitrary alternative; either way, we note, $\mathcal{A} - \{A \} \subseteq \mathcal{C}$.  The rule $F_{-A}$ is either dictatorial or it is not.  Suppose it is not: if $F_{-A}$ is proper then (by the induction hypothesis) Proposition~\ref{ppr} applies; if $F_{-A}$ isn't proper then there is some $Y \in \mathcal{C}$ ($Y \ne A$) it cannot reach, and Proposition~\ref{short3} applies.  Suppose $F_{-A}$ is dictatorial: then the following chain of claims shows $F$ has to be safely manipulable.   
\end{proof}


Note that the alternative $A$ has been fixed.  For convenience, let voter 1 be the dictator of $F_{-A}$.  In all of the following statements and proofs, $X$ will represent an alternative different from $A$, and $V_1$ and $V_2$ will denote $\{ 1 \}$ and $[n] - \{ 1 \}$ respectively.  Recall Proposition~\ref{ifpartition} showed that if $F_{V_1,V_2}$ is safely manipulable then so is $F$.


\begin{myclaim}
\label{myclaim1}
If $F_{-X}$ is non-dictatorial and can reach $A$ then $F$ is safely manipulable.
\end{myclaim}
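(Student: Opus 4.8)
The plan is to settle the claim by a short case split on whether the subrule $F_{-X}$ is onto $\mathcal{A}-\{X\}$. Two facts from the surrounding setup will be used throughout: the alternative $A$ was fixed so that $\mathcal{A}-\{A\}\subseteq\mathcal{C}$, and $F$ is assumed non-antagonistic, so every subrule $F_{-Z}$ is well defined and never returns $Z$. Note also that $F_{-X}$ chooses among the $|\mathcal{A}|-1\ge 3$ alternatives of $\mathcal{A}-\{X\}$, so the inductive hypothesis is available for it.

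First I would treat the case in which $F_{-X}$ is onto. Since $F_{-X}$ is non-dictatorial by hypothesis, being onto makes it proper, and as it is a rule on $|\mathcal{A}|-1\ge 3$ alternatives the inductive hypothesis applies and tells us $F_{-X}$ is safely manipulable. Proposition~\ref{ppr}, applied with $X$ in the role of the distinguished alternative, then lifts this to safe manipulability of $F$ itself.

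The remaining case is that $F_{-X}$ is not onto, so it omits some $Y\in\mathcal{A}-\{X\}$ from its range. Here the hypothesis that $F_{-X}$ reaches $A$ is exactly what is needed: it forces $Y\ne A$, whence $Y\in\mathcal{A}-\{A\}\subseteq\mathcal{C}$. Since also $Y\ne X$, Proposition~\ref{short3} (taking $Z=X$) applies directly and yields safe manipulability of $F$. I do not expect a genuine obstacle in either branch, as each reduces immediately to tools already in hand --- the inductive hypothesis together with Proposition~\ref{ppr} in one branch, and Proposition~\ref{short3} in the other. The only point requiring care is the bookkeeping in the second branch: verifying that the omitted alternative $Y$ is distinct from both $A$ and $X$, so that it indeed lies in $\mathcal{C}$ and Proposition~\ref{short3} is genuinely applicable.
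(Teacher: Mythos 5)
Your proposal is correct and follows essentially the same route as the paper's own proof: split on whether $F_{-X}$ is onto (hence proper), invoke the induction hypothesis together with Proposition~\ref{ppr} in the onto case, and Proposition~\ref{short3} in the other, using reachability of $A$ to place the omitted alternative $Y$ in $\mathcal{C}$. The only difference is that you spell out the bookkeeping ($Y\ne A$ and $Y\ne X$) that the paper leaves implicit.
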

\begin{proof}
If $F_{-X}$ is proper then Proposition~\ref{ppr} applies, otherwise there is some $Y \in \mathcal{C}$ ($Y \ne A$) it cannot reach, and we can appeal to Proposition~\ref{short3}.
\end{proof}


\begin{myclaim}
\label{myclaim2}
Either $F$ is safely manipulable or $F_{ V_1 , V_2 }$ returns $X$ whenever voter 1 ranks $X$ first.
\end{myclaim}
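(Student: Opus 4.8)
The plan is to work entirely inside the two-voter rule $F_{V_1,V_2}$: by Proposition~\ref{ifpartition} it suffices to make $F_{V_1,V_2}$ safely manipulable, and in this two-voter world safety is almost automatic. Indeed, whenever the two atomic voters carry different orders, each is the sole voter of her type, so any profitable unilateral deviation is safe in the sense of Proposition~\ref{endup} (the only subset of a singleton type containing $i$ is the whole type). Thus I only ever have to exhibit a profitable unilateral deviation, or an escape. The base case is immediate: if voter~$1$ submits an order $L_1$ with $X$ first and $A$ last, and voter~$V_2$ also ranks $A$ last, then every voter ranks $A$ last and the dictatorship of $F_{-A}$ (voter~$1$) forces $F_{V_1,V_2}(L_1,R_2)=X$.

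Next I would propagate the value $X$ as $A$ is raised in voter~$V_2$'s ballot, one adjacent transposition at a time, starting from $A$ at the bottom. Suppose the outcome is $X$ when $A$ sits just below some alternative $W$, and let $S_2^-$ and $S_2$ be voter~$V_2$'s ballots before and after swapping $A$ with $W$. If the new outcome $Y$ differs from $X$, I distinguish two possibilities. If $Y\neq A$, then since $X\neq A$ and $Y\neq A$ the set $\{X,Y\}$ cannot equal $\{A,W\}$, so $S_2^-$ and $S_2$ rank $X$ against $Y$ the same way; whichever of $X,Y$ voter~$V_2$ prefers yields a profitable deviation between $(L_1,S_2^-)$ and $(L_1,S_2)$, and we are done. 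If $Y=A$ but $W\neq X$, then $S_2^-$ and $S_2$ still rank $X$ against $A$ identically, and the same argument hands voter~$V_2$ a profitable move (toward whichever of $X,A$ she prefers). So the only transposition that can drive the outcome off $X$ without delivering a safe manipulation is the swap of $A$ past $X$ itself; in particular, whenever the target ballot $R_2$ ranks $X$ above $A$, the outcome is forced to be $X$.

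It then remains to raise $A$ in voter~$1$'s own ballot up to its position in the target $R_1$. Since $X$ stays on top of voter~$1$'s order and $A$ only moves among the lower slots, the very same dichotomy applies and keeps the outcome equal to $X$, delivering $F_{V_1,V_2}(R_1,R_2)=X$ in general.

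The hard part is the single residual configuration: voter~$1$ plays $X$ first and $A$ last, voter~$V_2$ ranks $A$ immediately above $X$ (in the extreme, $A$ first), and the outcome has become $A$. Here neither voter wishes to move---voter~$V_2$ already holds her top alternative $A$---so the only hope is an escape for voter~$1$, who ranks the winner $A$ last. I would supply that escape by letting voter~$1$ copy voter~$V_2$'s order $S_2$: the resulting completely agreed profile is valued in $\mathcal{C}$, and when $A\notin\mathcal{C}$ (the case $|\mathcal{C}|=|\mathcal{A}|-1$, in which $A$ is the missing alternative) this value is necessarily different from $A$, giving voter~$1$ a strict improvement and hence a safe escape. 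The genuinely delicate point, and the one demanding the most care, is to secure an escape when $A\in\mathcal{C}$: there the dictatorship of $F_{-A}$ is silent (it speaks only when voter~$V_2$ ranks $A$ last), and one must instead lean on weak unanimity together with the way $A$ was selected to rule out, or convert into a safe manipulation, the situation in which the $V_2$-block simply forces $A$ by ranking it first.
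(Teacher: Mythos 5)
Your overall mechanism (reduce to $F_{V_1,V_2}$, exploit that a profitable unilateral deviation by the sole voter of a type is automatically safe, and anchor at a profile where both voters rank $A$ last so that the dictatorship of $F_{-A}$ forces the outcome $X$) matches the paper's, but your execution diverges and leaves a genuine gap exactly where you flag one. The residual configuration --- $F_{V_1,V_2}(X\cdots A, S_2)=A$ with $S_2$ ranking $A$ above $X$ --- is not closable by the tools you reach for. Weak unanimity is not available here: Lemma~\ref{3Anotweaklyu} establishes it only for three alternatives, and the induction step for $|\mathcal{A}|\ge 4$ never assumes it. Nor does ``the way $A$ was selected'' help: when $|\mathcal{C}|=|\mathcal{A}|$ the alternative $A$ is chosen arbitrarily, so $A\in\mathcal{C}$ is entirely possible and your escape to the completely agreed profile $(S_2,S_2)$ can legitimately return $A$. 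Since the claim must hold for every $R_2$, including those with $A$ on top, this case cannot be dodged.

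The paper closes exactly this case with a move you do not make: it works with the specific ballot $\cdots AX$ for voter $V_2$ ($A$ second from the bottom, $X$ last). If $F_{V_1,V_2}(X\cdots A,\cdots AX)=A$ and voter $1$'s escape to $(\cdots AX,\cdots AX)$ fails, then $F$ maps a completely agreed profile in which everyone ranks $X$ last and $A$ next-to-last to $A$; hence $F_{-X}$ reaches $A$ at a profile where $A$ is nobody's top choice, so $F_{-X}$ is non-dictatorial and Claim~\ref{myclaim1} (resting on Propositions~\ref{ppr} and~\ref{short3} and the induction hypothesis) finishes the job. Your walk does not guarantee landing on such a profile, since your $S_2$ may have $A$ first, so even importing this trick would not close your gap in all cases. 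The same choice of $\cdots AX$ also makes your transposition staircase unnecessary: once $F_{V_1,V_2}(X\cdots A,\cdots AX)=X$ is secured, voter $V_2$ there receives her bottom-ranked alternative, so the absence of an escape forces $F_{V_1,V_2}(X\cdots A,L)=X$ for every $L$ in a single step, and voter $1$'s top-ranked $X$ then extends this to every ballot of the form $X\cdots$.
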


\begin{proof}
We try to avoid concluding $F$ is safely manipulable.

Voter 1 dictates $F_{-A}$, so $F_{ V_1 , V_2 }$ maps $( X\cdots A , \cdots XA )$ to $X$.  If that profile is not safely manipulable by voter $V_2$ then $F_{ V_1 , V_2 }( X\cdots A , \cdots AX ) \in \{ X, A \}$.

Suppose $( X\cdots A , \cdots AX )$ is mapped to $A$ rather than $X$.  If voter $V_1$ cannot safely manipulate that latter profile then $F_{ V_1 , V_2 }( \cdots AX , \cdots AX ) = A$, and Claim~\ref{myclaim1} applies.

So say $F_{ V_1 , V_2 }( X\cdots A , \cdots AX ) = X$.  If voter $V_2$ has no safe manipulations, $F_{ V_1 , V_2 }( X\cdots A , L ) = X$ for all $L$.  And if neither does voter $V_1$, $F_{ V_1 , V_2 }( X\cdots , L ) = X$ for all $L$.
\end{proof}


\begin{myclaim}
\label{myclaim3}
Either $F$ is safely manipulable or $F$ returns $X$ whenever voter 1 ranks $X$ first.
\end{myclaim}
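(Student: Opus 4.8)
The plan is to upgrade Claim~\ref{myclaim2}, which controls $F_{V_1,V_2}$ (the outcome when every voter in $V_2=[n]-\{1\}$ casts the \emph{same} ballot), into a statement about $F$ itself, where the $V_2$ voters may vote arbitrarily. Assuming throughout that $F$ is not safely manipulable, I would argue by induction on the number of distinct preference orders occurring among the $V_2$ voters, with the goal of showing $F(R)=X$ for every profile $R$ in which $R_1$ ranks $X$ first. The base case, in which at most one order is present in $V_2$, is precisely the conclusion of Claim~\ref{myclaim2}.

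For the inductive step I would fix such an $R$ with $m+1\ge 2$ distinct $V_2$-orders and suppose for contradiction that $F(R)=Y\neq X$. For any type $T$ present in $V_2$, with $W$ its entire set of voters, and any other present type $T'$, merging $W$ into $T'$ produces a profile in which voter~1 still ranks $X$ first but only $m$ distinct $V_2$-orders occur; the induction hypothesis forces its value to be $X$. Since $R$ and this merged profile differ only in the single type $W$, I would invoke Proposition~\ref{endup}: if $T$ ranks $X$ above $Y$ then the whole type $W$ has an incentive to switch to $T'$ (the full set witnesses it, since that switch gives $X\succ_W Y$) and switching all of $W$ lands on $X\succeq_W Y$, so $F$ is safely manipulable, a contradiction. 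Hence every order present in $V_2$ must rank $Y$ above $X$.

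It remains to defeat this residual configuration, in which $F(R)=Y$ while every $V_2$ voter prefers $Y$ to $X$ (so voter~1, ranking $X$ first, is a \emph{singleton} type). When only two orders occur in $V_2$, the profile $R$ is itself a one-subset deviation from a completely uniform $V_2$-profile: that uniform profile has value $X$ by Claim~\ref{myclaim2}, converting the whole $V_2$-type to either order again returns $X$ by Claim~\ref{myclaim2}, and the split producing $Y\succ X$ supplies the incentive, so Proposition~\ref{endup} once more yields safe manipulability. The main obstacle is the genuinely multi-type version of this residual case, with three or more $V_2$-orders all ranking $Y$ above $X$, where $R$ is no longer a single-subset deviation from any uniform profile and the plain Proposition~\ref{endup} argument need not bite. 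Here I would exploit that voter~1 is a singleton type: any unilateral change of voter~1's ballot is automatically a safe manipulation, so we may assume $Y$ is the best outcome voter~1 can force given the fixed $V_2$-ballots, and then feed this back through the dictatorship of $F_{-A}$ and Claim~\ref{myclaim1} to force a contradiction. I expect this collapse of the three-or-more-type configuration to be the delicate heart of the proof.
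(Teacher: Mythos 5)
Your induction is a genuinely different route from the paper's, and most of it is sound, but as written it has a real gap precisely where you flag one: the residual configuration with three or more $V_2$-orders is left unproved, and the substitute idea you sketch there (``feed this back through the dictatorship of $F_{-A}$ and Claim~\ref{myclaim1}'') is not an argument --- nothing available at this stage turns ``$Y$ is the best outcome voter~1 can force against the fixed $V_2$-ballots'' into a contradiction. The frustrating part is that this case is not delicate at all: your own two-type argument generalizes verbatim once you use the induction hypothesis in place of Claim~\ref{myclaim2}. Take two distinct orders $T,T'$ present in $V_2$, let $W$ and $W'$ be their voter sets, and pass to $R'=R_{-W}(T')$. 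Both $R'$ and $R'_{-(W\cup W')}(T)$ have only $m$ distinct $V_2$-orders while voter~1 still ranks $X$ first, so the induction hypothesis gives both the value $X$; at $R'$ the entire $T'$-type is $W\cup W'$ (voter~1 is excluded because $T'$ ranks $Y$ above $X$, hence does not rank $X$ first), the subset $W$ reverting to $T$ recovers $R$ with value $Y\succ_{T'}X$, which supplies the incentive, and the full switch yields $X\succeq_{T'}X$, so Proposition~\ref{endup} applies. A second, smaller gap: in your first reduction you apply Proposition~\ref{endup} to ``the entire set of voters'' of a type $T$ that ranks $X$ above $Y$; if that order happens to be $R_1$ itself (some $V_2$-voter shares voter~1's ballot), the entire type includes voter~1, the full switch moves voter~1 off an $X$-first ballot, and the induction hypothesis no longer controls $F(R_{-V}(T'))$. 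That case needs a separate, easy but not automatic, patch.

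For contrast, the paper dispenses with the induction entirely: given $Q$ with $Q_1=X\cdots$ and $F(Q)\ne X$, it forms $S$ by keeping $Q_1$ and giving every other voter one common order with $X$ last, so Claim~\ref{myclaim2} forces $F(S)=X$; it then changes $S_i$ to $Q_i$ one voter at a time, and at the first step where the outcome leaves $X$ the deviating voter, who currently ranks $X$ last, obtains an escape, which is safe by definition. Your route, once the two gaps above are repaired, is correct but substantially longer.
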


\begin{proof}
If there is a profile $Q$ such that $Q_1 = X \cdots$ but $F(Q) \ne X$ then (we show) an escape can be found.  Let $S$ be the profile at which $S_1 = Q_1 = X\cdots$, and for all $i,j > 1$ we have $S_i = S_j = \cdots X$.  If $F$ isn't already safely manipulable then by Claim~\ref{myclaim2} it must be that $F(S) = X$. Therefore, at $S$ every voter other than voter 1 sees their least welcome outcome realised.  Starting at $S$, for $i = 2, 3, 4, \ldots$ sequentially change $S_i$ to $Q_i$ (we do not imply $n \ge 2$ necessarily).  At some stage the value of $F$ will shift away from $X$; an escape is therefore possible under $F$.  
\end{proof}

\begin{corollary}
\label{inversedictatorship}
Either $F$ is safely manipulable or $F(R)=A$ implies voter 1 is ranking $A$ first.
\end{corollary}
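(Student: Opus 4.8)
The plan is to obtain this corollary as an essentially immediate consequence of Claim~\ref{myclaim3}, by contraposition, so very little new work is required. The first step is to dispose of the trivial case: if $F$ is safely manipulable then the first disjunct holds and there is nothing to prove, so I would assume throughout that $F$ is \emph{not} safely manipulable. Under this assumption Claim~\ref{myclaim3} tells us that for every alternative $X \ne A$, the rule $F$ returns $X$ at any profile where voter~1 ranks $X$ first.

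With that fact in hand I would prove the implication $F(R) = A \Rightarrow$ voter~1 ranks $A$ first by arguing its contrapositive. Suppose voter~1 does not rank $A$ first at $R$. Since $R_1$ is a strict linear order, voter~1 ranks some alternative $X \ne A$ in top position. Applying Claim~\ref{myclaim3} to this $X$ yields $F(R) = X$, and because $X \ne A$ this gives $F(R) \ne A$. Contraposing, $F(R) = A$ forces voter~1 to rank $A$ first, which is exactly the second disjunct.

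There is no real obstacle here: all the substantive argument lives in Claim~\ref{myclaim3} (and, upstream of it, in Claims~\ref{myclaim1} and~\ref{myclaim2}), and the corollary merely re-packages that claim in the contrapositive form that will be convenient for the subsequent claims. The only point worth checking is that a linear order always has a unique top alternative, so that ``voter~1 does not rank $A$ first'' legitimately produces a concrete $X \ne A$ ranked first to which Claim~\ref{myclaim3} can be applied.
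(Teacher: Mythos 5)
Your proposal is correct and matches the paper's (implicit) reasoning exactly: the paper states Corollary~\ref{inversedictatorship} without proof as an immediate consequence of Claim~\ref{myclaim3}, and your contrapositive argument---if voter~1 ranks some $X\ne A$ first then $F(R)=X\ne A$---is precisely the intended derivation.
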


\begin{corollary}
\label{WhatOneGets}
Either $F$ is safely manipulable or $F$ always returns either the first or the second choice of voter 1. 
\end{corollary}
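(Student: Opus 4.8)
The plan is to prove the contrapositive form directly: assume $F$ is \emph{not} safely manipulable and show that at every profile $R$ the value $F(R)$ must be voter~1's first or second choice. The whole argument rests on Claim~\ref{myclaim3} (under the standing assumption, $F$ returns $X$ whenever voter~1 ranks $X$ first, for $X\ne A$) together with Proposition~\ref{endup}. I would split into two cases according to what voter~1 ranks first.

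The easy case is when voter~1 ranks some $X\ne A$ first: Claim~\ref{myclaim3} gives $F(R)=X$ at once, and $X$ is voter~1's first choice, so we are done. Hence the real work is confined to the case where voter~1 ranks $A$ first. Write $Y$ for voter~1's second choice. If $F(R)=A$ we are finished (first choice), so suppose $F(R)=Z$ with $Z\ne A$; the goal becomes to show $Z=Y$.

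For the contradiction I would suppose $Z\ne Y$. Then, since voter~1 ranks $Y$ second and $Z$ strictly below it, we have $Y\succ_V Z$, where $V$ is the entire set of voters sharing voter~1's preference. Choose any linear order $L$ that ranks $Y$ first; since $R_1$ ranks $A$ first, $L\ne R_1$. By Claim~\ref{myclaim3}, switching voter~1 alone to $L$ yields outcome $Y$, so $F(R_{-\{1\}}(L))=Y\succ_V Z=F(R)$, which means voter~1 has an incentive to vote $L$. Applying Claim~\ref{myclaim3} once more, switching all of $V$ to $L$ also yields $Y$ (voter~1 still ranks $Y$ first), so $F(R_{-V}(L))=Y\succeq_V F(R)$. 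Proposition~\ref{endup} then forces $F$ to be safely manipulable, contradicting the standing assumption. Therefore $Z=Y$, i.e.\ $F(R)$ is voter~1's second choice, completing both cases.

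This corollary is almost mechanical once Claim~\ref{myclaim3} is available, so there is no deep obstacle; the one point to watch is that the argument must collapse exactly when $Z=Y$, which it does because the strict comparison $Y\succ_V Z$ degenerates. The only bookkeeping to keep honest is that Claim~\ref{myclaim3} is invoked for both a single-voter switch and a whole-type switch to $L$---both are legitimate, since Claim~\ref{myclaim3} constrains the outcome solely through voter~1's top-ranked alternative.
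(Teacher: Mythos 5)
Your argument is correct and is essentially the derivation the paper intends: the corollary is stated there without proof as an immediate consequence of Claim~\ref{myclaim3}, and your case split on voter~1's top choice plus the appeal to Proposition~\ref{endup} (noting that any coalition of voter~1's type switching to $L$ still leaves voter~1 ranking $Y$ first, hence still elects $Y$ by Claim~\ref{myclaim3}) fills that gap exactly. No issues.
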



In the next two proofs we will repeatedly use Corollary~\ref{WhatOneGets} without explicitly saying so.

\medskip


$F$ is onto, so $F(R) = A$ for some $R$.  If $F$ is not yet safely manipulable then (by Corollary~\ref{inversedictatorship}) $R_1 = AB \cdots C$ (with $B$ second and $C$ last without any loss of generality).  The objects represented by $R$, $A$, $B$, and $C$ shall remain fixed for the remainder of these claims.


\begin{myclaim}
\label{myclaim4}
Either $F$ is safely manipulable or $F_{-C}$ is non-dictatorial.
\end{myclaim}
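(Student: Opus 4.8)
The plan is to argue by contradiction from the standing assumption that $F$ is not safely manipulable: I suppose $F_{-C}$ is dictatorial and aim to force a safe manipulation. First I would pin down the dictator. No voter $k\neq 1$ can dictate $F_{-C}$: take any profile in which $C$ is last for everyone, voter~1 ranks $B$ first and voter~$k$ ranks $A$ first (possible since $A,B\neq C$ and they differ); by Claim~\ref{myclaim3} the value of $F$ there is $B$, so $F_{-C}$ returns $B\neq A$, contradicting $k$'s dictatorship. Hence the dictator is voter~1. Combining this with Claim~\ref{myclaim3} and Corollary~\ref{WhatOneGets}, voter~1 dictating $F_{-C}$ says \emph{exactly} that $F$ returns $A$ at every profile in which $C$ is ranked last by all voters and voter~1 ranks $A$ first (at every other $C$-last profile $F$ already returns voter~1's top choice by Claim~\ref{myclaim3}).

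Next I would manufacture a profile violating this. Since $F$ is non-dictatorial, voter~1 is not a dictator of $F$, so by Claim~\ref{myclaim3} there is a profile $Q$ at which voter~1 ranks $A$ first yet $F(Q)\neq A$; by Corollary~\ref{WhatOneGets} we have $F(Q)=D$, the second choice of voter~1 at $Q$. I would fix voter~1's ballot to have $C$ last (so $D\neq C$) and hold it fixed thereafter, so that by Corollary~\ref{WhatOneGets} every profile extending this fixed ballot takes a value in $\{A,D\}$. I then transform $Q$ into a completely $C$-last profile by pushing $C$ to the bottom of the remaining voters' ballots, one type-group at a time. If the resulting $C$-last profile still has value $\neq A$, this contradicts the characterisation of voter~1's dictatorship above, so $F_{-C}$ is non-dictatorial and the claim holds. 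Otherwise the value flips from $D$ to $A$ at some stage.

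At that flipping stage a single type-group $V$ (not containing voter~1, since voter~1's ballot is $C$-last while the moving ballot is not) switches from a ballot $b$ to the ballot $b'$ obtained by lowering $C$, and the outcome moves from $D$ to $A$. Because $A,D\neq C$, the relative order of $A$ and $D$ is the same in $b$ and $b'$, so the voters of $V$ have a definite preference between $A$ and $D$. If they prefer $A$, then at the pre-flip profile (where $V$ sincerely votes $b$ and $F=D$) the whole of $V$ has an incentive to vote $b'$, and moving all of $V$ yields $A\succeq_V D$; if they prefer $D$, then at the post-flip profile (where $V$ sincerely votes $b'$ and $F=A$) the whole of $V$ has an incentive to vote $b$, and moving all of $V$ yields $D\succeq_V A$. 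In either case Proposition~\ref{endup} delivers that $F$ is safely manipulable, which is the contradiction sought. Thus one of the two horns always gives the dichotomy of the claim.

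The main obstacle is keeping voter~1's ballot under control, so that the flip is always caused by a group \emph{not} containing voter~1; this is exactly what pins the two reachable outcomes down to $\{A,D\}$ and lets Proposition~\ref{endup} apply, and it is why I fix voter~1's ($C$-last) ballot before transforming the others. The residual fiddly points are (i) the degenerate case where merely lowering $C$ in voter~1's own ballot flips the outcome to $A$, which is harmless because $A$ is voter~1's \emph{first} choice, so the flip is an unambiguous improvement and Proposition~\ref{endup} again applies; and (ii) the bookkeeping of making $V$ the \emph{entire} type-group being moved (so that distinct groups are not conflated once some have been pushed to $C$-last), which is handled by performing the transformation one adjacent transposition of $C$ at a time in a fixed order. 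These are the delicate parts; the conceptual core is the reduction to the two-valued range $\{A,D\}$ together with Proposition~\ref{endup}.
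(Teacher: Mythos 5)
Your overall route is genuinely different from the paper's, and its main phase is sound: once you have a profile at which voter~1's ballot is $A$-first and $C$-last and $F$ does not return $A$, Corollary~\ref{WhatOneGets} pins every outcome reachable without touching that ballot to the two-element set $\{A,D\}$, so at any flip caused by a group not containing voter~1 the full-class outcome is automatically $\succeq_V$ the status quo and Proposition~\ref{endup} applies exactly as you say. The gap is in the preliminary normalisation, i.e.\ your ``fiddly point (i)''. Non-dictatorship plus Claim~\ref{myclaim3} only give a profile $Q$ with $Q_1$ ranking $A$ first and $F(Q)\ne A$; they do not give one with $C$ last in $Q_1$, and your plan to lower $C$ inside voter~1's own ballot is not covered by Proposition~\ref{endup}. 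Concretely, suppose at some stage voter~1's ballot is $b=ACE\cdots$, the outcome is $C$ (her second choice, as Corollary~\ref{WhatOneGets} permits), other voters share the ballot $b$, and the transposition to $b'=AEC\cdots$ flips the outcome to $A$ when voter~1 alone switches but to $E$ when the whole type-class $V$ switches. Then $A\succ_V C\succ_V E$: voter~1 has an incentive, but $F(R_{-V}(b'))=E$ is strictly worse for $V$ than $F(R)=C$, so the hypothesis of Proposition~\ref{endup} fails and the move is a potential overshoot, not ``an unambiguous improvement.'' This is precisely the one transposition (moving $C$ out of second place) at which the second choice of voter~1's ballot changes, so the $\{A,D\}$ range argument that protects the rest of your construction is unavailable here, and nothing in the standing assumptions rules the configuration out.

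The paper avoids this by never deviating voter~1 in an unsafe environment: it builds the profile $S$ in which every voter other than voter~1 ranks $A$ last, so voter~1 is the unique member of her type and every profitable individual deviation by her is vacuously safe; it then passes to the two-voter rule $F_{V_1,V_2}$ and needs only two further profile comparisons. To repair your argument you would need either this uniqueness device or some other way to reach a profile with $Q_1=A\cdots C$ and $F(Q)\ne A$ that does not pass through the $C$-in-second-place overshoot.
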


\begin{proof}
Assume no previous result shows $F$ is safely manipulable.  Voter 1 does not dictate $F$ (no-one does), so there exists a profile $Q$ for which $Q_1 = AX \cdots$ but $F(Q)$ equals $X$ rather than $A$.

Let $S$ be the profile arising when $S_1 = Q_1 = AX \cdots$ and $S_i = S_j = \cdots A$ for all $i,j > 1$.  If $F(S) = A$ then by sequentially changing $S_k,\ k \ge 2$, to $Q_k$ we can find an escape.  So set $F(S) = X$.

Now consider the two-voter profile $( R_1 , S_2 ) =  ( AB\cdots C , \cdots A )$.  If $F_{V_1,V_2}$ maps this profile to $A$ then voter 1 ($1 \in [n]$) can safely manipulate $S$ with $R_1 = AB\cdots C$.  So set $F_{V_1,V_2}( AB\cdots C , \cdots A ) = B$.

Finally, consider the profile $( AB\cdots C , \cdots BAC )$.  If $F_{V_1,V_2}$ maps this profile to $A$, voter $V_2$ can safely manipulate it with $S_2 = \cdots A$; if the profile is mapped to $B$, $F_{-C}$ has no dictator.
\end{proof}


\begin{myclaim}
\label{myclaim5}
Either $F$ is safely manipulable or $F_{-C}$ can reach $A$.
\end{myclaim}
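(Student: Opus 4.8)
The plan is to show that the profile obtained from $R$ by sinking $C$ to the bottom of every ballot already witnesses that $F_{-C}$ reaches $A$, unless a safe manipulation surfaces along the way. Throughout I keep voter~1 frozen at $R_1 = AB\cdots C$, so by Corollary~\ref{WhatOneGets} every profile I consider is mapped by $F$ into $\{A,B\}$; in particular $C$ is unreachable while voter~1 holds this order. Let $R^\star$ be the profile in which voter~1 keeps $AB\cdots C$ and every other voter's ballot is $R_i$ with $C$ moved to last place. Since voter~1 already ranks $C$ last, $R^\star$ has $C$ last on every ballot, so $F(R^\star)=A$ would already give $F_{-C}(\,\cdot\,)=A$ and finish the proof; hence I may assume $F(R^\star)=B$.

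Next I morph $R$ into $R^\star$ by processing the voter types present among voters $2,\dots,n$ one at a time, each time lowering $C$ to the bottom of the whole current type-block at once. Voter~1 never moves, so the value stays in $\{A,B\}$ at every stage; it starts at $A$ and ends at $B$. Let $P \to P'$ be the first block-move that drops the value from $A$ to $B$, let $V$ be the full type that moves, with order $L$ before and $\bar L$ (the same order with $C$ sunk to the bottom) after, so that $P' = P_{-V}(\bar L)$, $F(P)=A$ and $F(P')=B$. Because only $C$ was relocated, the $A$-versus-$B$ ranking is identical in $L$ and $\bar L$, and I split on it.

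If $V$ ranks $B$ above $A$, then at $P$ the whole block has an incentive to vote $\bar L$, since the full switch already yields $F(P_{-V}(\bar L))=F(P')=B \succ_V A = F(P)$; as also $F(P_{-V}(\bar L)) = B \succeq_V F(P)$, Proposition~\ref{endup} makes $F$ safely manipulable. (Note that voter~1, who ranks $A$ above $B$, cannot belong to a $B$-preferring type, so this branch never disturbs the frozen ballot.) If instead $V$ ranks $A$ above $B$, I work at $P'$: reverting $V$ to $L$ restores $A$, so the block has an incentive to vote $L$ via $F(P'_{-V}(L)) = F(P) = A \succ_V B$. Writing $V^\star$ for the entire type-$\bar L$ set at $P'$, Proposition~\ref{endup} needs only $F(P'_{-V^\star}(L)) \succeq_{V^\star} B$; and since voter~1 stays put, this switched profile is again mapped into $\{A,B\}$, of which $B$ is the worse for a type that prefers $A$, so the inequality holds automatically and $F$ is safely manipulable.

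The step I expect to be the genuine obstacle is the $A$-preferring branch. First, it is not an \emph{escape} in the strict sense---$\bar L$ ranks $C$, not the winner $B$, last---so I lean on Proposition~\ref{endup} rather than on the escape shortcut; and once $C$ has sunk to the bottom the moved block can merge with ballots already carrying the order $\bar L$, so the full type $V^\star$ may properly contain $V$. The observation that rescues this is precisely that freezing voter~1 pins $F$ to $\{A,B\}$: it yields the hypothesis $F(P'_{-V^\star}(L))\succeq_{V^\star}B$ for free, without my having to evaluate that outcome, so the merging of blocks is immaterial. The residual subtlety is the degenerate case in which a block is lowered exactly onto voter~1's own order $AB\cdots C$, placing voter~1 inside $V^\star$; then a coalition that moves voter~1 could push the value out of $\{A,B\}$. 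I would dispose of this by assigning any such colliding voters a $C$-last target order different from $AB\cdots C$ and treating the single extra $A$/$B$ transposition this introduces as its own elementary step, so that voter~1 is never enrolled in the manipulating type.
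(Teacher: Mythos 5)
Your main line of argument is sound and genuinely different from the paper's: the paper runs the deformation in the opposite direction, starting from the profile $S$ in which every voter $i\ge 2$ reports the single fixed order $\cdots ABC$ and moving toward $R$, so that the manipulating type at the critical $B\to A$ transition is always $\cdots ABC$ --- an order that has $C$ last, ranks $A$ above $B$, and (because $|\mathcal{A}|\ge 4$) can never coincide with $R_1=AB\cdots C$. That choice makes both your case split and your ``residual subtlety'' evaporate. Your direction of travel, from $R$ toward the $C$-sunk profile $R^\star$, forces you to handle moving blocks of either $A$-versus-$B$ orientation and, more seriously, the collision $\bar L=R_1$.

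The genuine gap is in your patch for that collision. Rerouting the colliding block through an extra $A/B$ transposition does not work: if the transposition is performed after $C$ is sunk, the intermediate profile still places the block on $AB\cdots C$ and the collision recurs at the first of the two steps; and in whichever order the two steps are taken, the critical $A\to B$ drop may occur at the transposition step itself, where the moving block's $A$-versus-$B$ preference flips in the same direction as the outcome (it prefers $A$ and gets $A$ before the step, prefers $B$ and gets $B$ after), so neither the pre- nor the post-profile supplies the incentive on which your dichotomy relies. The collision can in fact be disposed of without any rerouting, by invoking Claim~\ref{myclaim3}: since $\bar L=AB\cdots C$ forces $L$ to begin with either $A$ or $C$, any coalition containing voter~1 that switches to $L$ produces either $A$ (if $L=A\cdots$, the type's best outcome) or $C$ (if $L=C\cdots$, in which case no coalition containing voter~1 can ever reach $A$, so voter~1 has no incentive to vote $L$ and can never belong to a coalition witnessing unsafety), while coalitions excluding voter~1 keep the outcome in $\{A,B\}$. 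With that repair --- or with the paper's choice of direction --- your argument goes through.
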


\begin{proof}
Again assume no previous result shows $F$ is safely manipulable.  Construct the profile $S$ by setting $S_1 = R_1 = AB \cdots C$ and for $i,j \ge 2$ setting $S_i = S_j = \cdots ABC$.  If $F(S) = A$ then $F_{-C}$ can reach $A$.  Suppose $F(S) = B$.  For $k \ge 2$, one by one change $S_k$ to $R_k$.  One of these changes must induce the value of $F$ to change from $B$ to $A$; this manipulation will be safe because $F$ will never return $C$ while voter 1's report remains $AB\cdots C$.
\end{proof}


\begin{myclaim}
\label{myclaim6}
We cannot avoid concluding $F$ is safely manipulable.
\end{myclaim}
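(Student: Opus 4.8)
The plan is to close the inductive argument by stitching together the three claims already in hand, letting $C$ play the role of the auxiliary alternative $X$ from Claim~\ref{myclaim1}. All the genuine work has been front-loaded into Claims~\ref{myclaim1},~\ref{myclaim4}, and~\ref{myclaim5}; what remains is purely a matter of combining their disjunctions correctly. Recall that at this stage the objects $R$, $A$, $B$, $C$ are fixed, with $F(R)=A$ and (assuming $F$ is not already safely manipulable) $R_1 = AB\cdots C$, so that $A$ is voter~1's first choice and $C$ is last.

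First I would argue by contradiction, supposing $F$ is not safely manipulable. Each of Claims~\ref{myclaim4} and~\ref{myclaim5} has the form ``either $F$ is safely manipulable or [property]'', so under the standing assumption both properties must hold: Claim~\ref{myclaim4} forces $F_{-C}$ to be non-dictatorial, and Claim~\ref{myclaim5} forces $F_{-C}$ to reach $A$. Next I would note that $C$ is a legitimate instance of the generic alternative $X$ in Claim~\ref{myclaim1}, since $C$ is voter~1's last choice in $R_1 = AB\cdots C$ while $A$ is first, whence $C \ne A$. Applying Claim~\ref{myclaim1} with $X := C$ then yields precisely its conclusion: because $F_{-C}$ is non-dictatorial and can reach $A$, the rule $F$ is safely manipulable. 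This contradicts the assumption, so $F$ must have been safely manipulable all along, completing the induction begun in the proof of Theorem~\ref{ASSWtheorem}.

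There is no real computational obstacle here, as the technical content lives entirely inside the earlier claims; the only thing to verify is the bookkeeping. Specifically, I would double-check that $C \ne A$ so that $C$ is an admissible value of $X$, and that ``$F_{-C}$ non-dictatorial and reaching $A$'' is exactly the hypothesis invoked by Claim~\ref{myclaim1}. If one wished to avoid the language of contradiction, the same three claims chain directly: in every case either $F$ is safely manipulable outright, or $F_{-C}$ is non-dictatorial and reaches $A$, and in the latter case Claim~\ref{myclaim1} again delivers safe manipulability of $F$.
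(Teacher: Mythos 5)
Your proposal is correct and follows exactly the paper's own argument: the paper likewise combines Claims~\ref{myclaim4} and~\ref{myclaim5} to conclude that $F_{-C}$ is non-dictatorial and reaches $A$, then applies Claim~\ref{myclaim1} with $X=C$. The extra bookkeeping you mention (checking $C\ne A$) is consistent with the paper's standing convention that $X$ denotes an alternative different from $A$.
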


\begin{proof}
If no earlier result has directly shown $F$ is safely manipulable then $F_{-C}$ is non-dictatorial and can reach $A$, and Claim~\ref{myclaim1} applies.
\end{proof}


We now prove Corollary~\ref{extendGS} from Section~\ref{sec:model}.

\begin{proof}
By Theorem~\ref{ASSWtheorem}, there exists a profile $R$ and a voter $j$ such that, at $R$, $j$ has an incentive to safely strategically vote $L \ne R_j$.  If
\[
F(R_{-\{ j \}}(L)) \succ_{ \{ j \} } F(R)
\]
then we are done.  If not, then $F(R_{-\{ j \}}(L)) = F(R)$.  If $F$ is not anonymous then we cannot guarantee the voter we've identified as $j$ can be the same as the voter we identify as $i$ in the statement of this corollary.  Let $V^*$ be the set of all voters that, at $R$, are of type $R_j$ and have an incentive to (safely or unsafely) strategically vote $L$.  Then let $V_1$ be a maximal element of $V^*$ satisfying both $j \in V_1$ and
\[
F(R_{-V_2} (L)) = F(R)
\]
whenever $j \in V_2 \subseteq V_1$.  Such a set exists because $\{ j \}$ meets the two criteria.  Now $V^* \ne V_1$ for if not, at $R$ the voter $j$ would have no incentive to strategically vote $L$.  So we can find $i \in V^* - V_1$, and for this voter it will be the case that
\[
F(R_{-V_1 \cup \{ i \}}(L)) \succ_{\{i\}} F(R_{-V_1}(L)) = F(R)
\]
and 
\[
F(R_{-V_1 \cup V_3}(L)) \succeq_{\{i\}} F(R_{-V_1}(L)) = F(R)
\]
whenever $i \in V_3 \subseteq V^*-V_1$.  The first line above follows from the properties of $V_1$, and implies $i$ can manipulate alone at $R_{-V_1}(L)$.  The second line follows because $j \in V_1 \cup V_3$ and $L$ is a safe vote for $j$ at $R$.  Let $V^\prime$ comprise of those voters who at $R_{-V_1}(L)$ are of type $R_j$ and have an incentive to strategically vote $L$.  Necessarily, $V^\prime \subseteq V^*$.  More specifically, $V^\prime \subseteq V^* - V_1$, because at $R_{-V_1}(L)$ the voters in $V_1$ are reporting $L$.  The second relation therefore implies $L$ is a safe strategic vote for $i$ at $R_{-V_1}(L)$.
\end{proof}


\section{Discussion}
\label{sec:conclusion}

We summarise our paper, then identify directions future research might take.

We broadened the framework in which Gibbard and Satterthwaite worked so we could look into what can transpire when a group of like-minded voters vote strategically and in unison (while all other voters vote sincerely).  The notion of an incentive to vote strategically was our starting point.  We then distinguished between safe and unsafe strategic votes, the latter being (primarily) those that could potentially lead to strategic overshooting or undershooting.  After providing examples to illustrate and motivate our definitions, we proved our main theoretical result, then used it to extend the Gibbard-Satterthwaite Theorem as follows: we showed that under any onto and non-dictatorial social choice rule, employed to choose one of at least three alternatives, there can arise a situation where a voter can do strictly better by casting a {\em safe} strategic vote than an unstrategic one, provided all other voters are sincere.  

Despite being broader than that of Gibbard and Satterthwaite our framework is still rather narrow. If we extend our framework further a  strategic vote which is safe in our framework may become unsafe. Firstly, voters of other types may not all be sincere. Secondly, and more importantly, voters of the strategic type may have several conflicting safe manipulating moves and again may face inability to coordinate which safe manipulating vote to choose. 
This can be illustrated by the following hypothetical example. Suppose that in a profile there are only two voters 1 and 2 with sincere preference $ABC$. Suppose that they have two safe manipulating moves $L$ and $L'$ and depending on how they vote the results are shown in the following table. 

\begin{center}
\begin{tabular}{cc|ccc}
& & \multicolumn{3}{c}{voter 2} \\
& & $ABC$ & $L$ & $L^\prime$ \\
\hline
\multirow{3}{*}{voter 1} & $ABC$ & $B$ & $B$ & $B$ \\
& $L$ & $B$ & $A$ & $C$ \\
& $L^\prime$ & $B$ & $C$ & $A$
\end{tabular}
\end{center}

We see that if voters 1 and 2 vote sincerely, then the result is $B$, their second best alternative. However if they choose the same manipulating move---either both $L$ or both $L'$---then the outcome is $A$, their most preferred alternative. If they mis-coordinate and choose different manipulating moves, then the outcome is $C$ which is their least desirable alternative. Again for a success of manipulation coordination is necessary.
One particular scenario in which such coordination may occur is when a particular voter (perhaps a public figure)  is capable of sending a single message to the whole electorate (say through the media) calling upon his followers to vote in a certain way.  It is reasonable to suggest that only voters who share his views  might respond and join this loosely formed coalition.  However, not all of them will respond; some of them might consider voting strategically unethical, some just will not get the message. Obviously, to make such a call the public figure must have an incentive to vote strategically and be sure that it is safe to act on it. 
Our study implies that if voters have little ability to communicate with each other then many incentives for voting strategically are relatively weak being unsafe.

\cite{Chamberlin1985} wrote that ``increasingly sophisticated communications technology'' may increase the ``threat'' of manipulations.  Nowadays we live in a much more interconnected world than 25 years ago with many voters being connected by social networks and with a few individuals (mosty celebrities) having millions of followers. 
Have online communication technologies improved the ability of strategic voters to identify and respond appropriately to incentives to vote strategically?  In particular, do such technologies aid like-minded strategic voters attempting to coordinate?  These questions are yet to be studied.

\section*{Acknowledgements}
Arkadii Slinko was supported by the University of Auckland Faculty of Science FRDF grant 3624495/9844.  Shaun White gratefully acknowledges the support of the University of Auckland Faculty of Science.


\bibliographystyle{elsarticle-harv}
\bibliography{Overshooting36}     

\end{document}